\newtheorem{Lemma}{Lemma}
\newtheorem{lemma}[Lemma]{$\mathbf{Lemma}$}
\newcounter{problem}
\newcounter{save@equation}
\newcounter{save@problem}
\newenvironment{problem}
{\setcounter{problem}{\value{save@problem}}%
  \setcounter{save@equation}{\value{equation}}%
  \let\c@equation\c@problem
  \subequations
}
{\endsubequations
  \setcounter{save@problem}{\value{equation}}%
  \setcounter{equation}{\value{save@equation}}%
}
\begin{document}%%
\title{\vspace{-0.5em} \huge{ Joint Beam Management and Power Allocation in    THz-NOMA  Networks  }}

\author{ Zhiguo Ding, \IEEEmembership{Fellow, IEEE}  and H. Vincent Poor, \IEEEmembership{Life Fellow, IEEE}    \thanks{ 
  
\vspace{-2em}

    Z. Ding and H. V. Poor are  with the Department of
Electrical and Computer Engineering, Princeton University, Princeton, NJ 08544,
USA. Z. Ding
 is also  with the School of
Electrical and Electronic Engineering, the University of Manchester, Manchester, UK (email: \href{mailto:zhiguo.ding@manchester.ac.uk}{zhiguo.ding@manchester.ac.uk}, \href{mailto:poor@princeton.edu}{poor@princeton.edu}).

  }\vspace{-2.5em}}
 \maketitle

\vspace{-1em}
\begin{abstract}
This paper investigates how to apply non-orthogonal multiple access (NOMA)   as an add-on in terahertz (THz)  networks. In particular, prior to the implementation of NOMA,  it is assumed that there exists a legacy THz system, where    spatial beams have already been configured  to serve legacy  primary users.  The aim of this paper is to study  how these pre-configured spatial beams can be used  as  a type of bandwidth resources, on which   additional secondary users are served without degrading   the performance of the legacy primary users. A   joint beam management and power allocation problem is first formulated as   a mixed combinatorial non-convex optimization problem, and then solved by two methods with different performance-complexity tradeoffs, one based on the branch and bound method and the other based on successive convex approximation. Both analytical and simulation results are presented to illustrate the new features of beam-based resource allocation in THz-NOMA networks and also demonstrate that those pre-configured spatial beams can be employed to improve the system throughput and connectivity in a spectrally efficient manner. 
\end{abstract}\vspace{-2em}

\begin{IEEEkeywords}
Non-orthogonal multiple access (NOMA), Terahertz (THz), the branch and bound method, successive convex approximation, beam management, power allocation. 
\end{IEEEkeywords}
\vspace{-1.5em} 

\section{Introduction}
Terahertz (THz) communications and non-orthogonal multiple access (NOMA) are two key enabling technologies  for the envisioned sixth generation (6G) mobile network \cite{you6g, 8766143}. On the one hand, the use of THz communications is promising because a huge amount of bandwidth in the THz spectrum is available for communications \cite{6005345, 9614341,8610080,jeffthz}. On the other hand, the use of NOMA transmission can significantly improve spectral efficiency and support massive connectivity, by encouraging intelligent spectrum cooperation among mobile users \cite{mojobabook,jsacnomaxmine}.  The two communication techniques are naturally complementary to each other. For example,   using  NOMA to improve the spectral efficiency of THz networks    is well motivated by the fact that some of the anticipated applications   of 6G, such as immersive augmented reality (AR) and virtual reality (VR) as well as wireless transmission of ultra-high definition (UHD) video, can soon make  the THz spectrum as crowded as those     sub-6G Hz bands.

In the literature, THz-NOMA has been investigated from the following
two perspectives. From the performance analysis perspective, the bit-error-rate performance of THz-NOMA has been studied in \cite{9634116}, where   sophisticated schemes for adaptive superposition coding and subspace detection have been developed. In \cite{thznoma1}, NOMA has been applied to THz networks to mitigate  beam misalignment errors, where the analytical results have been developed to show that the use of NOMA can improve the outage performance and  user connectivity simultaneously.   THz-NOMA can also be applied to cooperative communications in order to improve the coverage of mobile networks, as shown in \cite{9569475}. From the resource allocation perspective, the issue of user clustering, i.e., which THz users are to be grouped together for the implementation of NOMA, has been studied in   \cite{9115278,9257475, 9695364}.  The application of advanced machine learning methods to resource allocation in THz-NOMA networks has also been investigated in \cite{9605603}, where  intelligent reflecting surfaces have been used to reconfigure the wireless propagation environment. 

Unlike these existing works about THz-NOMA, this paper  focuses on how to use NOMA as a type of add-on in THz networks. In particular, in this paper, it is assumed that there exists a legacy THz network prior to the application of NOMA, where   spatial beams have already been configured   to serve legacy    primary users.   The aim of this paper is to investigate how to use these existing spatial beams and serve additional secondary users without degrading the performance of the legacy network. There are two motivations for the considered scenario. One motivation is that showing   NOMA can be used as an add-on for existing communication systems demonstrates the compatibility of NOMA with other communication techniques and opens the door to more applications for NOMA \cite{9693536}. The other motivation is to demonstrate that spatial beams can   be used as a type of bandwidth resources, where allocating the pre-configured  beams to secondary users in THz-NOMA networks is similar to subcarrier allocation in  orthogonal frequency division multiplexing (OFDM)   systems   \cite{793310,8170332}.  The contributions of this paper are listed as follows:
\begin{itemize}
\item The considered joint beam management and power allocation problem is first formulated as  a mixed combinatorial non-convex optimization problem, and then transformed to an equivalent but more concise matrix form, where the discrete optimization variables are eliminated  and the number  of continuous optimization variables is also reduced by using the feature of the considered communication problem. 

\item Analytical results are developed in the paper to illustrate that the spatial beams can be used as bandwidth resources, but  resource allocation based on these spatial beams is fundamentally different from methods  based on conventional resources, such as OFDM subcarriers. In particular, because the pre-configured beams are not matched to the secondary users' channels, inter-beam interference exists, which makes conventional resource allocation approaches, such as water-filling power allocation,   not applicable. 

\item The optimal performance of the considered THz-NOMA network is identified by applying the branch-and-bound (BB) method to the formulated resource allocation problem \cite{5765556,8170332}. Recall that the BB method can be viewed as a type of structured exhaustive search, which means that it yields the optimal performance but  results in  significant computational complexity. Therefore,  low-complexity suboptimal resource allocation based on successive convex approximation (SCA) is also proposed in this paper \cite{6365845,7946258}. 

\item Simulation results are also presented to illustrate some interesting features of beam based resource allocation. In particular, the use of NOMA can ensure that the overall system throughput of THz networks is   significantly improved, by using those pre-configured spatial beams.  For the special case with a single secondary user, greedy scheduling, i.e., using a single beam, is optimal.   In addition, the   throughput  of the THz-NOMA network is improved if there are more secondary users involved because of  multi-user diversity, but is degraded if the number of beams is increased because of inter-beam interference.  %Furthermore, the performance gain of THz-NOMA is improved if the beams are less directional, because the beams are not designed to match the secondary users' channels.  
\end{itemize}

 \section{System Model}
 For the considered THz communication system, there are two types of users, namely primary users and secondary users. The primary users form a legacy network, and the aim of the paper is to investigate how to serve those secondary users by using the spatial beams pre-configured for the primary users, as described in the following subsections.

 \subsection{A Legacy THz Network Based on Hybrid Beamforming}
 In the considered    legacy network,    $K$ single-antenna primary users, denoted by ${\rm U}_k^P$, $1\leq k \leq K$, are served by a  base station equipped with $N$ antennas, where $K\leq N$ is assumed. Denote $s_k^P$ by the signal to be sent to primary user ${\rm U}_k^P$, and $\rho_k^P$ by the corresponding transmit power.
 
Hybrid beamforming is used to serve these $K$ primary  users, i.e., the following signal vector is sent by the base station:
  \begin{align}
\tilde{\mathbf{s}}^P= \begin{bmatrix}
\tilde{ \mathbf{f}}_1  &\cdots &\tilde{\mathbf{f}}_K
 \end{bmatrix} \mathbf{P} \mathbf{s}^P,
 \end{align}  
 where  $ \mathbf{s}^P=\begin{bmatrix}
 s_1^P &\cdots &s_K^P
 \end{bmatrix}^T$, $\mathbf{P}$ denotes the $K\times K$ digital beamforming matrix,  and $\tilde{\mathbf{f}}_k$ denotes the  analog beamforming vector for ${\rm U}_k^P$. 
 
There have been extensive studies for the design of hybrid beamforming. For example, beamsteering can be used to design analog beamforming \cite{7400949,7561012}. In particular,  $\tilde{\mathbf{f}}_k$ can be selected from the following beamsteering codebook:
 \begin{align}
\tilde{\mathbf{f}}_k\in \left\{ \frac{1}{\sqrt{N}}\mathbf{a}\left( \frac{2\pi \times 0}{N_Q} \right), \cdots ,  \frac{1}{\sqrt{N}}\mathbf{a}\left( \frac{2\pi (N_Q-1)}{N_Q} \right) \right\},
 \end{align}
where $  N_Q$ denotes the size of the   codebook,    $\mathbf{a}(\theta) $ is an $N\times 1$ vector defined as follows: 
 \begin{align}
\mathbf{a}(\theta) = \begin{bmatrix}
 1& e^{-j\frac{2\pi f_c d \sin(\theta)}{c}}&\cdots &e^{-j\frac{2(N-1)\pi f_c d \sin(\theta)}{c}} 
 \end{bmatrix}^T,
 \end{align}
 $f_c$ denotes the carrier frequency, $d$ denotes the antenna spacing, and $c$ denotes the speed of light. Therefore,  $\tilde{\mathbf{f}}_k$ can be obtained by  finding a   vector from the  codebook whose $\theta$ is closest to   ${\rm U}_k^P$'s  angle of departure. After analog beamforming is obtained,  simple approaches, such as zero forcing, can be used to design digital beamforming. % i.e., $\mathbf{P}$ can be obtained as follows:
%\begin{align}
%\mathbf{P} = \left(\mathbf{H}\mathbf{F}\right)^{-1}\mathbf{D},
%\end{align}
%where $\mathbf{F}=\begin{bmatrix}\tilde{\mathbf{f}}_1&\cdots &\tilde{\mathbf{f}}_K \end{bmatrix}$, and $\mathbf{D}$ is a diagonal matrix to ensure power normalization, i.e., ${\rm trace}(\mathbf{P}\mathbf{P}^H)=1$. 
 
In this paper, it is assumed that  both $\tilde{\mathbf{f}}_k$ and $\mathbf{P}$ have already been configured prior to the implementation of NOMA,   and the aim of the paper is to investigate how to  serve additional users without changing the configuration of the legacy network, as discussed in the next subsection.

\subsection{Serving Additional Users via THz-NOMA}
Consider   that there are   $M$ secondary users to be served via THz-NOMA. Denote the composite beamforming vector designed for primary user ${\rm U}_k^P$ by $\mathbf{f}_k$, i.e.,  
\begin{align}
\mathbf{f}_k =  \begin{bmatrix}
\tilde{ \mathbf{f}}_1  &\cdots &\tilde{\mathbf{f}}_K
 \end{bmatrix} \mathbf{p}_k ,
\end{align}
where $\mathbf{p}_k$ denotes the $k$-th column of $\mathbf{P}$.

Each of the pre-configured $K$ beams, $\mathbf{f}_k$, can be viewed as a type of bandwidth resources, and are to be allocated to the secondary users, which is similar to conventional subcarrier allocation problems in OFDMA systems.  
To facilitate the problem formulation, the beam allocation indicator, denoted by $s_{jk}$, is introduced \cite{793310,8648498}. In particular,   $s_{jk}=1$ if   secondary user ${\rm U}_j^S$ is allocated to the beam designed for primary user ${\rm U}_k^P$, otherwise $s_{jk}=0$. In order to reduce the system complexity, it is assumed that at most one secondary user can be scheduled on each of the existing beams, $\mathbf{f}_k$, which imposes the following constraints on $s_{jk}$:
\begin{align}
\sum^{M}_{j=1}s_{jk} = 1, \quad \&\quad s_{jk}\in \{0, 1\},\quad  \forall j, k.
\end{align}

By using the  beam allocation indicator,   ${\rm U}^P_k$  receives the following signal: \cite{jeffthz, 7400949} 
  \begin{align}
 y^P_k =&  \frac{a_k^P}{ \sqrt{{\rm PL}_k^P} } \mathbf{a}^H(\theta_k^P)
\sum^{K}_{i=1}\mathbf{f}_i\left(
 \sqrt{\rho_{i}^P}s_{i}^P+ \sum^{M}_{m=1}s_{mi}\sqrt{\rho_{mi}^S}s_{mi}^S\right) 
 +n_k^P,
   \end{align}  
   where ${\rm PL}_k^P$ denotes the path loss suffered by ${\rm U}_{k}^P$ and is defined as follows:
   \begin{align}
{\rm PL}_k^P=\left(\frac{c}{4\pi f_c}\right)^{-2} {e^{\zeta r_{k,P}}}{\left(r_{k,P}^{\alpha_{\rm PL}}+1\right)},
\end{align}
$r_{k,P}$ denotes the distance between the base station and ${\rm U}_{k}^P$,  $\alpha_{\rm PL}$ denotes the path loss exponent, $\zeta$ denotes the molecular absorption coefficient, $a_k^P$ denotes the fading coefficient, $\theta_k^P$ denotes ${\rm U}_k^P$'s  angle of departure, $s_{mi}^S$ denotes ${\rm U}^S_{m}$'s signal sent on ${\rm U}^P_i$'s beam,   $\rho_{mi}^S$ denotes ${\rm U}^S_{m}$'s transmit power for signal  $s_{mi}^S$, and $n_k^P$ denotes the additive white Gaussian noise with power  $\sigma^2$.

To avoid changing  the legacy system, it is assumed that the primary users treat the secondary users' signals as noise, and directly decode their own information, which means that the following data rate is achievable at the $k$-th primary user:
 \begin{align}
R_{k}^P =& \log\left( 1+\frac{ \frac{|a_k^P|^2}{ {{\rm PL}_k^P}}  |\mathbf{a}^H(\theta_k^P)\mathbf{f}_k|^2 \rho_k^P}{\frac{|a_k^P|^2}{ {{\rm PL}_k^P}}|  \mathbf{a}^H(\theta_k^P)\mathbf{f}_k|^2\sum^{M}_{m=1} s_{mk}\rho_{mk}^S+ I_{\rm IBI}^k+\sigma^2}
\right) ,
 \end{align}
 where $ I_{\rm IBI}$ denotes inter-beam interference and is given by
 \begin{align}
  I_{\rm IBI}^k= \frac{|a_k^P|^2}{ {{\rm PL}_k^P}}\sum^{K}_{i=1,i\neq k} | \mathbf{a}^H(\theta_k^P)\mathbf{f}_i|^2 \left(\rho_i^P+\sum^{M}_{m=1}s_{mi}\rho_{mi}^S\right).
 \end{align}
In order to guarantee ${\rm U}_k^P$'s target data rate which is denoted by $\bar{R}_k^P$,   the beam and   power allocation for the secondary users should  satisfy  the following condition:   $R_{k}^P \geq \bar{R}_k^P$.

  On the other hand, if $s_{jk}=1$, i.e., secondary user ${\rm U}^S_{j}$   is  served on $\mathbf{f}_k$, this secondary user can decode the primary user's signal with  the following data rate:
  \begin{align}\label{RJarrowk}
{R}_{j\rightarrow k}^S = \log\left( 1+\frac{ \frac{|a_{j}^S|^2}{ {{\rm PL}_{j}^S}}  |\mathbf{a}^H(\theta_{j}^S)\mathbf{f}_k|^2 \rho_k^P}{\frac{|a_{j}^S|^2}{ {{\rm PL}_{j}^S}}|  \mathbf{a}^H(\theta_{j}^S)\mathbf{f}_k|^2 \sum^{M}_{m=1}  s_{mk}\rho_{mk}^S+ I_{\rm IBI}^{jk} +\sigma^2}
\right) ,
\end{align} 
where the secondary user's channel parameters are defined similarly to those for the primary users and their definitions are omitted due to space limitations. The inter-beam interference, denoted by $ I_{\rm IBI}^{jk}$, is given by
 \begin{align}
  I_{\rm IBI}^{jk}=  \frac{|a_{j}^S|^2}{ {{\rm PL}_{j}^S}}\sum^{K}_{i=1,i\neq k} | \mathbf{a}^H(\theta_{j}^S)\mathbf{f}_i|^2 \left(\rho_i^P +\sum^{M}_{m=1} s_{mi}\rho_{mi}^S\right).
 \end{align}

%However, with the assumption $s_{jk}=1$, i.e., only ${\rm U}_j^S$ is served on beam $\mathbf{f}_k$, the data rate  expression can be simplified as shown  in \eqref{RJarrowk}. 

%   {\bf need to be feasible}

 Provided that  $s_{jk}=1$ and $  {R}_{j\rightarrow k}^S \geq \bar{R}_k^P$,  secondary user ${\rm U}_j^S$ can successfully decode primary user ${\rm U}_k^P$'s signal and then decode its own signal sent on beam $\mathbf{f}_k$ with the following data rate:
   \begin{align}
  {R}_{j,k}^S = \log\left( 1+\frac{ \frac{|a_{j}^S|^2}{ {{\rm PL}_{j}^S}}  |\mathbf{a}^H(\theta_{j}^S)\mathbf{f}_k|^2 \rho_{jk}^S}{    I_{\rm IBI}^{jk} +\sigma^2}
\right) .
 \end{align}

% {\bf skj has been used twice.}

The aim of this paper is to design a joint beam management   and power allocation approach for maximizing the secondary users' sum data rate,  as formulated in the following:
 \begin{problem}\label{pb:1} 
  \begin{alignat}{2}
\underset{\rho_{jk}^S,s_{jk}}{\rm{max}} &\quad    
  \sum^{M}_{j=1} \sum^{K}_{k=1}s_{jk}  {R}_{j,k}^S\label{1tobj:1} \\
\rm{s.t.} & \quad s_{jk}\left( R_{k,j}^P  
-\bar{R}_k^P
\right)\geq 0, \forall k, j \label{1tst:1}
\\
& \quad   \tilde{R}_{k}^P -\bar{R}_k^P \geq 0 , \forall k
\label{1tst:11}\\
& \quad s_{jk}\left(  {R}_{j\rightarrow k}^S -\bar{R}_k^P\right)\geq 0 , \forall k, j  \label{1tst:2}
\\ &\quad 
 s_{jk}\in \{0,1\}, \forall k,j , \quad \sum^{M}_{j=1}s_{jk}\leq 1, \forall k\label{1tst:4}
\\
& \quad  \sum_{j=1}^{M}\sum^{K}_{k=1}s_{jk}\rho^S_{jk}\leq P^{{\rm max}} ,   \label{1tst:5}
  \end{alignat}
\end{problem} 
where $\tilde{R}_{k}^P =  \log\left( 1+\frac{ \frac{|a_k^P|^2}{ {{\rm PL}_k^P}}  |\mathbf{a}^H(\theta_k^P)\mathbf{f}_k|^2 \rho_k^P}{  I_{\rm IBI}^k+\sigma^2}
\right) $, and 
$P_{\rm max}$ denotes the transmit power budget for the secondary users. The constraints in \eqref{1tst:1} and \eqref{1tst:11}  ensure that the  primary users' quality of service (QoS) requirements, i.e., their target data rates, can be met when additional secondary users are served on the existing $K$ beams. Note that in addition to constraint   \eqref{1tst:1}, constraint  \eqref{1tst:11}  is required  because it is possible that no secondary user is scheduled on beam $\mathbf{f}_k$ but primary user ${\rm U}_k^P$ still suffers   the interference from the secondary users on the other beams.   Constraint  \eqref{1tst:11} can be omitted  if the beams are orthogonal, i.e., $\mathbf{f}_k^H\mathbf{f}_i=0$, $k\neq i$.   

 The constraint in \eqref{1tst:2} ensures that for a secondary user which is scheduled on beam $\mathbf{f}_k$,  successive interference cancellation (SIC) can be carried out successfully. The constraint in   \eqref{1tst:4} ensures that at most one secondary user is served on each of the $K$ beams. It is important to point out that it is possible that none of the secondary users is scheduled on one  beam, and one secondary user is scheduled on multiple beams, i.e., the secondary users are scheduled in an opportunistic   manner, where the investigation of user fairness among the secondary users is beyond  the scope of this paper and will be treated as a promising direction for future research.  

 Problem \ref{pb:1} is challenging to solve since it is a mixed combinatorial non-convex
optimization problem. In particular, it is straightforward to verify that   the objective function is not concave, and the constraints in \eqref{1tst:1} and \eqref{1tst:2}  are not convex. In addition, the beam allocation indicator, $s_{jk}$, is a binary optimization variable. In this paper,    problem \ref{pb:1} will be solved  by applying the BB  and SCA methods which realize different performance-complexity  tradeoffs.

% \subsection{Motivation for the Application of NOMA  }
%Assume that there are $M$ single-antenna secondary users, denoted by ${\rm U}_m^S$, $1\leq m \leq M$.  Without changing the configuration of the legacy system, the secondary users can treat the primary users' beamforming vectors as types of bandwidth resources. In particular, denote the composite beamforming vector designed for primary user ${\rm U}_k^P$ by $\mathbf{f}_k$ as defined in the following:
%\begin{align}
%\mathbf{f}_k =  \begin{bmatrix}
%\tilde{ \mathbf{f}}_1  &\cdots &\tilde{\mathbf{f}}_K
% \end{bmatrix} \mathbf{p}_k .
%\end{align}

%\subsubsection{Non-orthogonal $\mathbf{P}$}
%Full analog beamforming, we can use my previous examples. 
%\subsubsection{Orthogonal $\mathbf{P}$}

\section{Problem Reformulation }
In this section, the joint beam and power allocation problem presented in \eqref{pb:1}  will be   reformulated  to facilitate the applications of the BB and SCA methods, where the property of the considered optimization problem is also studied.  
 
In order to simplify the notations, first define  $ h_{ki}^P\triangleq \frac{|a_k^P|^2}{ {{\rm PL}_k^P}}  | \mathbf{a}^H(\theta_k^P)\mathbf{f}_i|^2$, and  $R_{k}^P$ can be rewritten as follows: 
 \begin{align}
R_{k}^P  
=&\log\left( 1+\frac{ h_{kk}^P \rho_k^P}{h_{kk}^P \sum^{M}_{m=1}s_{mk}\rho_{mk}^S+  \sum^{K}_{i=1,i\neq k} h_{ki}^P(\rho_i^P+\sum^{M}_{m=1}s_{mi}\rho_{mi}^S) +\sigma^2}
\right) .
 \end{align}
 By using this simplified expression of $R_{k}^P  $, constraint \eqref{1tst:1}  can be simplified as follows: 
  \begin{align}\label{constraint1}
s_{jk}\left(  s_{jk}\rho_{jk}^S-  \frac{   \rho_k^P}{2^{\bar{R}_k^P}-1}+\frac{1}{h_{kk}^P }\sum^{K}_{i=1,i\neq k} h_{ki}^P\left(\rho_i^P+\sum^{M}_{m=1}s_{mi}\rho_{mi}^S\right) +\frac{\sigma^2}{h_{kk}^P } \right)\leq 0,
  \end{align}
  where $ \sum^{M}_{m=1}s_{mk}\rho_{mk}^S$ is reduced to $ s_{jk}\rho_{jk}^S$ because of the use of $s_{jk}$ outside of the bracket at the left-hand side of \eqref{constraint1}. In particular, if $s_{jk}=1$, i.e., secondary user ${\rm U}_{j}^S$ is   scheduled on beam $\mathbf{f}_k$,  $ \sum^{M}_{m=1}s_{mk}\rho_{mk}^S=s_{jk}\rho_{jk}^S$. If $s_{jk}=0$, the constraint shown in \eqref{constraint1} is not active, and the expression inside of the   bracket at the left-hand side of \eqref{constraint1} has no impact.

  Note that $\rho_k^P$ is a constant because the primary users' transmit powers are assumed to be fixed. Therefore, constraint \eqref{1tst:1}  can be further simplified as follows:
\begin{align}\label{p1cnew}
 s_{jk}\left( 
s_{jk}\rho_{jk}^S+\frac{1}{h_{kk}^P }\sum^{K}_{i=1,i\neq k} h_{ki}^P\sum^{M}_{m=1}s_{mi}\rho_{mi}^S+c_k
\right)\leq 0,
\end{align}  
where     $c_k =  \frac{1}{h_{kk}^P }\sum^{K}_{i=1,i\neq k} h_{ki}^P\rho_i^P-\frac{  \rho_k^P}{2^{\bar{R}_k^P}-1}+\frac{\sigma^2}{h_{kk}^P }$.
  
Similarly by introducing    the following definition, $h_{jk}^S=\frac{|a_j^S|^2}{ {{\rm PL}_j^S}}  |\mathbf{a}^H(\theta_j^S)\mathbf{f}_k|^2$, the data rate for ${\rm U}_{j}^S$ to decode ${\rm U}_k^P$ can be simplified as follows: 
  \begin{align}
 {R}_{j\rightarrow k}^S 
=& \log\left( 1+\frac{ h_{jk}^S \rho_k^P}{h_{jk}^S  \sum^{M}_{m=1}s_{mk}\rho_{mk}^S+  \sum^{K}_{i=1,i\neq k} h_{ji}^S(\rho_i^P + \sum^{M}_{m=1}s_{mi}\rho_{mi}^S) +\sigma^2}
\right) ,
 \end{align}
 which means that   constraint  \eqref{1tst:2} which ensures the condition $  {R}_{j\rightarrow k}^S \geq \bar{R}_k^P$  can be rewritten as follows:
   \begin{align}
s_{jk}\left( s_{jk}\rho_{jk}^S -   \frac{  \rho_k^P}{2^{\bar{R}_k^P}-1}+\frac{1}{h_{jk}^S}\sum^{K}_{i=1,i\neq k} h_{ji}^S\left(\rho_i^P +\sum^{M}_{m=1}s_{mi}\rho_{mi}^S\right)  +\frac{\sigma^2}{h_{jk}^S}\right)\leq 0,
 \end{align}
 where  the beam allocation indicator is used to simplify the term $ \sum^{M}_{m=1}s_{mk}\rho_{mk}^S$ to $  s_{jk}\rho_{jk}^S$. 
By using the fact that the primary users' powers are constants, constraint  \eqref{1tst:2} can be further simplified as follows:
    \begin{align}\label{p1dnew}
s_{jk}\left( s_{jk}\rho_{jk}^S +\frac{1}{h_{jk}^S}\sum^{K}_{i=1,i\neq k} h_{ji}^S \sum^{M}_{m=1}s_{mi}\rho_{mi}^S   +b_{jk}\right)\leq 0,
 \end{align}
 where $b_{jk} =\frac{1}{h_{jk}^S}\sum^{K}_{i=1,i\neq k} h_{ji}^S \rho_i^P  -   \frac{  \rho_k^P}{2^{\bar{R}_k^P}-1}+ \frac{\sigma^2}{h_{jk}^S}$.

 Similarly, by 
   applying the above reformulation steps, the objective function can be also simplified    as follows:
   \begin{align}\nonumber
  {R}_{j,k}^S = &\log\left( 1+\frac{ h_{jk}^S   \rho_{jk}^S}{ \sum^{K}_{i=1,i\neq k} h_{ji}^S(\rho_i^P +\sum^{M}_{m=1}s_{mi}\rho_{mi}^S)   +\sigma^2}
\right)\\\label{objnew}
=&
\log\left( 1+\frac{ h_{jk}^S   \rho_{jk}^S}{ \sum^{K}_{i=1,i\neq k} h_{ji}^S \sum^{M}_{m=1}s_{mi}\rho_{mi}^S  +t_{jk}}
\right),
 \end{align}
 where $t_{jk} =  \sum^{K}_{i=1,i\neq k} h_{ji}^S\rho_i^P   +\sigma^2$. 
 
 By applying  \eqref{p1cnew}, \eqref{p1dnew} and \eqref{objnew} to problem \ref{pb:1}, the considered optimization  problem can   be equivalently recast as follows: 
   \begin{problem}\label{pb:2} 
  \begin{alignat}{2}
\underset{\rho_{jk}^S,s_{jk}}{\rm{max}} &\quad    
  \sum^{M}_{j=1}   \sum^{K}_{k=1} s_{jk}\log\left( 1+\frac{ h_{jk}^S   \rho_{jk}^S}{ \sum^{K}_{i=1,i\neq k} h_{ji}^S \sum^{M}_{m=1}s_{mi}\rho_{mi}^S  +t_{jk}}
\right) \label{2tobj:1} \\
\rm{s.t.} & \quad    s_{jk}\left( 
s_{jk}\rho_{jk}^S+\frac{1}{h_{kk}^P }\sum^{K}_{i=1,i\neq k} h_{ki}^P\sum^{M}_{m=1}s_{mi}\rho_{mi}^S+c_k
\right)\leq 0, \forall k, j\label{2tst:1}
\\&\quad
 \frac{1}{h_{kk}^P }\sum^{K}_{i=1,i\neq k} h_{ki}^P\sum^{M}_{m=1}s_{mi}\rho_{mi}^S+c_k
 \leq 0,  \forall k\label{2tst:11}
\\ &\quad s_{jk}\left( s_{jk}\rho_{jk}^S +\frac{1}{h_{jk}^S}\sum^{K}_{i=1,i\neq k} h_{ji}^S \sum^{M}_{m=1}s_{mi}\rho_{mi}^S   +b_{jk}\right)\leq 0 , \forall k, j \label{2tst:2}
\\
&\quad  \eqref{1tst:1}, \eqref{1tst:4}, \& \eqref{1tst:5}.
  \end{alignat}
\end{problem} 
Problem \ref{pb:2} is concise enough to obtain certain insight for the feature of beam-based resource allocation, as shown in the next subsection. 

\subsection{Special Cases with $M=1$ and $K>1$} When $M=1$, there is a single secondary user and  problem \ref{pb:2}  is to find out how the overall transmit power, $P^{\max}$, can be distributed among the $K$ beams. Intuitively, the water-filling approach can be applied, i.e., all the beams are employed and more power is allocated to a beam with a stronger channel gain. However, our conducted simulation results show a surprising result that greedy scheduling, i.e., using a single beam, is optimal. In particular, the greedy scheduling problem can be simply formulated as follows:
  \begin{problem}\label{pb:2y} 
  \begin{alignat}{2}
\underset{\rho_{1k}}{\rm{max}} &\quad    
   \log\left( 1+\frac{ h_{1k}^S   }{  t_{1k}} \max\{0,\min\{P^{\max}, -c_k, -b_{1k}\}\}
\right) \label{2ytobj:1} .
  \end{alignat}
\end{problem}

The following lemma shows that the conjecture that greedy scheduling is optimal   holds in the special case with $M=1$ and $K=2$. 

\begin{lemma}\label{lemma0}
Consider a special case with $M=1$ and $K=2$, where  the legacy network has been design to ensure that   the   target data rates of the primary users are     small, i.e., $\bar{R}_k^P\rightarrow 0$,  $k\in\{1,2\}$, and   all the primary users use the same transmit power.  At high SNR, i.e., $\sigma^2\rightarrow0$, the optimal solution of problem \ref{pb:2} is the same as that of problem \ref{pb:2y}. 
\end{lemma}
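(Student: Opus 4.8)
Write $\rho^P$ for the common primary transmit power, and set $A\triangleq h_{11}^S$, $B\triangleq h_{12}^S$, $u\triangleq\rho_{11}^S$, $v\triangleq\rho_{12}^S$ (the degenerate cases $h_{11}^Sh_{12}^S=0$ are immediate, so assume $A,B>0$). The plan has two parts: first collapse problem~\ref{pb:2} for $M=1,K=2$ to a two-beam scheduling problem in the stated limit, and then rule out the two-beam configuration by a single algebraic inequality. \emph{Step 1 --- the primary-QoS/SIC constraints become inactive.} Since $\bar R_k^P\to0$ forces $2^{\bar R_k^P}-1\to0^+$, the terms $-\rho_k^P/(2^{\bar R_k^P}-1)$ in $c_k$ and $b_{1k}$ tend to $-\infty$, while every admissible secondary power is bounded by $P^{\max}$ through \eqref{1tst:5}; hence for all sufficiently small $\bar R_k^P$ the bracketed quantities in \eqref{2tst:1}, \eqref{2tst:11}, \eqref{2tst:2} (and their $k=2$ versions) are strictly negative, so these constraints are inactive (and \eqref{1tst:11} is automatic because $\tilde R_k^P\ge0$). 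Letting in addition $\sigma^2\to0$ gives $t_{11}\to B\rho^P$ and $t_{12}\to A\rho^P$. Thus problem~\ref{pb:2} reduces to maximizing $s_{11}\log\!\big(1+\tfrac{Au}{B(s_{12}v+\rho^P)}\big)+s_{12}\log\!\big(1+\tfrac{Bv}{A(s_{11}u+\rho^P)}\big)$ over $s_{11},s_{12}\in\{0,1\}$, $u,v\ge0$, $s_{11}u+s_{12}v\le P^{\max}$.

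\emph{Step 2 --- case analysis.} The choice $(s_{11},s_{12})=(0,0)$ gives objective $0$; $(1,0)$ gives $\log(1+\tfrac{Au}{B\rho^P})$ with $u\le P^{\max}$, optimal at $u=P^{\max}$; $(0,1)$ gives symmetrically $\log(1+\tfrac{BP^{\max}}{A\rho^P})$. In the same limit problem~\ref{pb:2y} has $\min\{P^{\max},-c_k,-b_{1k}\}=P^{\max}$ and $t_{1k}\to\sum_{i\neq k}h_{1i}^S\rho_i^P$, so these two numbers are exactly the values of problem~\ref{pb:2y} for $k=1$ and $k=2$. It therefore remains to show that the exponentiated $(1,1)$-optimum
\[
V\ \triangleq\ \max_{u,v\ge0,\ u+v\le P^{\max}}\Big(1+\frac{Au}{B(v+\rho^P)}\Big)\Big(1+\frac{Bv}{A(u+\rho^P)}\Big)
\]
does not exceed $1+\max\{\tfrac AB,\tfrac BA\}\tfrac{P^{\max}}{\rho^P}$, i.e.\ the exponential of the larger of the two single-beam optima above.

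\emph{Step 3 --- the key inequality.} I will show that for all $u,v\ge0$ and $q>0$,
\[
\Big(1+\frac{Au}{B(v+q)}\Big)\Big(1+\frac{Bv}{A(u+q)}\Big)\ \le\ 1+\max\Big\{\frac AB,\frac BA\Big\}\frac{u+v}{q}.
\]
The left-hand side is invariant under the simultaneous swap $(u,A)\leftrightarrow(v,B)$, so it suffices to treat $A\ge B>0$, when the right-hand side is $1+\tfrac{A(u+v)}{Bq}$. Clearing the positive denominator $ABq(u+q)(v+q)$, the inequality is equivalent to the nonnegativity of
\[
A\big(Bq+A(u+v)\big)(u+q)(v+q)-q\big(B(v+q)+Au\big)\big(A(u+q)+Bv\big),
\]
and expanding the two products shows that this quantity equals $v\big(A^2u(u+v)+A(2A-B)uq+(A^2-B^2)q(v+q)\big)$, each summand of which is $\ge0$ for $A\ge B>0$ and $u,v,q\ge0$. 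Applying this with $q=\rho^P$ and $u+v\le P^{\max}$ yields $V\le1+\max\{\tfrac AB,\tfrac BA\}\tfrac{P^{\max}}{\rho^P}$, with equality at the single-beam point $(u,v)=(P^{\max},0)$ when $A\ge B$ (otherwise $(0,P^{\max})$). Hence some optimal solution of problem~\ref{pb:2} activates one beam at full power $P^{\max}$, namely the beam with the larger $h_{1k}^S/t_{1k}$, which is exactly the solution delivered by problem~\ref{pb:2y}.

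\emph{Expected main obstacle.} The only nontrivial content is Step~3, and the difficulty there is one of discovery rather than computation: one must first exponentiate the objective and then identify the correct linear-in-$(u+v)$ majorant $1+\tfrac{A(u+v)}{Bq}$, after which the inequality collapses to the displayed polynomial identity with visibly nonnegative coefficients. A longer alternative is to verify that the $(1,1)$-objective has no interior stationary point over the triangle $\{u,v\ge0,\ u+v\le P^{\max}\}$ and that on each of its edges the maximum occurs at a vertex, which again leaves only the two single-beam vertices as candidates. A minor point to state carefully in Step~1 is that the two limits $\bar R_k^P\to0$ and $\sigma^2\to0$ enter only to deactivate the primary constraints and simplify $t_{jk}$, and the conclusion does not depend on the order in which they are taken.
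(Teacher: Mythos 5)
Your proof is correct, and its core step is genuinely different from the paper's. Both arguments coincide in Step 1: the limits $\bar{R}_k^P\to 0$ and $\sigma^2\to 0$ kill $c_k$ and $b_{1k}$, deactivate the primary QoS and SIC constraints, and reduce problem~\ref{pb:2} to maximizing the two-term sum rate over the power split between the beams — this is exactly the paper's reduction to the function $f(\alpha)$ with $u=\alpha\rho^S$, $v=(1-\alpha)\rho^S$. Where you diverge is in how you show the single-beam point wins. The paper differentiates $f(\alpha)$, observes that $f$ may be convex or concave, shows by a somewhat delicate computation that $f'(1)>0$ for all $x>1$ and $\beta\ge 0$, locates the roots of the quadratic in the numerator of $f'$ relative to $[0,1]$, and combines this with $f(0)<f(1)$ to conclude $\max f=f(1)$. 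You instead exponentiate and prove the global majorization $\bigl(1+\tfrac{Au}{B(v+q)}\bigr)\bigl(1+\tfrac{Bv}{A(u+q)}\bigr)\le 1+\max\{\tfrac{A}{B},\tfrac{B}{A}\}\tfrac{u+v}{q}$ by clearing denominators; I checked the expansion and the difference does factor as $v\bigl(A^2u(u+v)+A(2A-B)uq+(A^2-B^2)q(v+q)\bigr)$, which is termwise nonnegative for $A\ge B$, with equality at $v=0$. Your route buys a purely algebraic, calculus-free certificate that also makes the equality case (and hence the identification of the optimal beam as the one maximizing $h_{1k}^S/t_{1k}$) transparent, and it handles the $(0,0)$, $(1,0)$, $(0,1)$ configurations explicitly rather than implicitly; the paper's route buys a picture of the monotonicity structure of $f(\alpha)$ (illustrated in its Fig.~6) that it uses to motivate why water-filling-style interior splits fail. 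The only cosmetic gaps are your one-line dismissal of the measure-zero degenerate case $h_{11}^Sh_{12}^S=0$ and the implicit claim that the $(1,1)$ supremum is attained, neither of which affects correctness.
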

\begin{proof}
See Appendix \ref{proof0}. 
\end{proof}
%$\rho^S_{1k^*}=P^{\max}$, where $k^*=\arg \max \{h_{1k}^S, 1\leq k \leq 2\}$

The proof for the conclusion that greedy scheduling is optimal for  a more general case with $M=1$ and $K>1$ is difficult to obtain, and will be an important direction for future research.  
Note that for the general cases with $M>1$, greedy scheduling is not optimal, and problem \ref{pb:2} needs to be further rearranged to facilitate the application of the BB and SCA methods, as shown in the next subsections.

\subsection{Eliminating the Binary Optimization Variables,  $s_{jk}$  } Problem \ref{pb:2} is challenging to solve due to the facts that $s_{jk}$ is binary and also   the two optimization variables, $s_{jk}$ and $\rho^S_{jk}$, are coupled. As shown in \cite{8648498}, the binary optimization variables can be eliminated    by introducing   the following continuous variable:  $\tilde{\rho}_{jk}^S=s_{jk}\rho^S_{jk}$. By using this auxiliary variable, constraints \eqref{2tst:1} and \eqref{2tst:11}  can be combined together and equivalently expressed as follows:
\begin{align}\label{new1}
 \tilde{\rho}_{jk}^S+\frac{1}{h_{kk}^P }\sum^{K}_{i=1,i\neq k} h_{ki}^P\sum^{M}_{m=1}\tilde{\rho}_{mi}^S+c_k
\leq 0,\forall k, j,
\end{align}
which can be explained in the following.  For the case that  $s_{jk}=1$,   $\tilde{\rho}_{jk}^S=\tilde{\rho}_{jk}^S$, and it is straightforward to show 
that \eqref{new1} is equivalent to \eqref{2tst:1}, which is  stricter   than \eqref{2tst:11} and hence constraint \eqref{2tst:11} can be ignored in this case.  For the case  that  $s_{jk}=0$, $\tilde{\rho}_{jk}^S=0$,  \eqref{new1} is the same as \eqref{2tst:11}, whereas constraint  \eqref{2tst:1} is not active in this case. 

Intuitively,  constraints \eqref{2tst:1} can also  be equivalently reformulated to the following concise expression by using $\tilde{\rho}_{jk}^S$:  
\begin{align}\label{new2}
 \tilde{\rho}_{jk}^S +\frac{1}{h_{jk}^S}\sum^{K}_{i=1,i\neq k} h_{ji}^S \sum^{M}_{m=1} \tilde{\rho}_{mi}^S   +b_{jk} \leq 0 , \forall k, j.
\end{align}
For the case that $s_{jk}=1$,   \eqref{new2} is indeed equivalent to \eqref{2tst:1}. However, for the case that $s_{jk}=0$, \eqref{new2} is not equivalent to \eqref{2tst:1}, because  the original constraint in \eqref{2tst:1} is not active in this case but the new constraint in \eqref{new2} is still active and expressed as follows: $\frac{1}{h_{jk}^S}\sum^{K}_{i=1,i\neq k} h_{ji}^S \sum^{M}_{m=1} \tilde{\rho}_{mi}^S   +b_{jk} \leq 0$. Or in other words, the constraint in \eqref{new2} cannot be used to replace \eqref{2tst:1} because an   extra constraint is introduced    if secondary user ${\rm U}^S_j$ is not scheduled on beam $\mathbf{f}_k$. Instead, constraint \eqref{2tst:1}  can be equivalently recast as follows:
\begin{align}\label{new3}
{\rm sign} (\tilde{\rho}_{jk}) \left( \tilde{\rho}_{jk}^S +\frac{1}{h_{jk}^S}\sum^{K}_{i=1,i\neq k} h_{ji}^S \sum^{M}_{m=1} \tilde{\rho}_{mi}^S   +b_{jk} \right)\leq 0 , \forall k, j,
\end{align}
where ${\rm sign}(x)$ denotes the sign of $x$.

Furthermore, $\tilde{\rho}_{jk}$ can also be used to simplify the objective function, where
the following equality can be established:
\begin{align}\label{eq21}
s_{jk}\log\left( 1+\frac{ h_{jk}^S   \rho_{jk}^S}{ \sum^{K}_{i=1,i\neq k} h_{ji}^S \sum^{M}_{m=1}s_{mi}\rho_{mi}^S  +t_{jk}}
\right)  =  \log\left( 1+\frac{ h_{jk}^S  \tilde{ \rho}_{jk}^S}{ \sum^{K}_{i=1,i\neq k} h_{ji}^S \sum^{M}_{m=1} \tilde{\rho}_{mi}^S  +t_{jk}}
\right) ,
\end{align}
which is explained in the following. For the case that $s_{jk}=1$,   $\tilde{\rho}_{jk}^S=\tilde{\rho}_{jk}^S$ and hence the two sides of \eqref{eq21} are the same. For the case that $s_{jk}=0$, the two sides of \eqref{eq21} are zero and still equivalent. 

 Note that the use of these new expressions shown in  \eqref{new1}  \eqref{new3}, and \eqref{eq21}  can avoid   using the beam assignment indicator.   However, in order to ensure that at most a single secondary user is scheduled on one beam, i.e., $\sum^{M}_{j=1}s_{jk}=1$, a penalty variable, denoted by $\xi$, needs to be introduced, and the new objective function is given by 
\begin{align}\label{new obj} 
  \sum_{j=1}^M\sum^K_{k=1}   \log\left( 1+\frac{ h_{jk}^S   \tilde{\rho}_{jk}^S}{ \xi h_{jk}^S\sum^{M}_{m=1,m\neq j}\tilde{\rho}_{mk}^S + \sum^{K}_{i=1,i\neq k} h_{ji}^S \sum^{M}_{m=1}\tilde{\rho}_{mi}^S  +t_{jk}}
\right) . 
\end{align}
By using \eqref{new1}  \eqref{new3}, and \eqref{new obj}, problem \ref{pb:2} can be recast as follows:\footnote{It can be straightforwardly verified   that  the objective and the constraints of problem \ref{pb:3} are monotonic functions, which means that similar to the BB method,  monotonic optimization can also be used to find the optimal solution of problem \ref{pb:3}.  } 
 \begin{problem}\label{pb:3} 
  \begin{alignat}{2}
\underset{\rho_{jk}^S }{\rm{max}}  &\quad    
   \sum_{j=1}^M\sum^K_{k=1}    \log\left( 1+\frac{ h_{jk}^S   \tilde{\rho}_{jk}^S}{ \xi h_{jk}^S\sum^{M}_{m=1,m\neq j}\tilde{\rho}_{mk}^S + \sum^{K}_{i=1,i\neq k} h_{ji}^S \sum^{M}_{m=1}\tilde{\rho}_{mi}^S  +t_{jk}}
\right) \label{3tobj:1} \\
\rm{s.t.} & \quad     \sum_{j=1}^M\sum^K_{k=1} \tilde{\rho}_{jk}\leq P^{\rm max}\label{3ttst:3}
\\
&\quad 
 \eqref{new1},  \eqref{new3}.
  \end{alignat}
\end{problem} 
As shown in \cite{8648498}, for the case $\xi\rightarrow \infty$, problem \ref{pb:3} is equivalent to problem \ref{pb:2}.

\subsection{Reducing the Number of Variables to Be Optimized} For problem \ref{pb:3}, there are $MK$    variables to be optimized, i.e., $\tilde{\rho}_{jk}$, $1\leq j\leq M$ and $1\leq k\leq K$,  which can cause significant computational complexity. It is important to point out that the constraints in \eqref{new1} and \eqref{new3} can be used to  reduce the number of optimization variables. For example, by using the constraint in \eqref{new1}, one can conclude that   if $c_k>0$,  $\tilde{\rho}_{jk}^S=0$ which also means $s_{jk}=0$, $\forall j\in\{1,\cdots, M\}$, because the following inequality can never be satisfied
\begin{align}
 \tilde{\rho}_{jk}^S+\frac{1}{h_{kk}^P }\sum^{K}_{i=1,i\neq k} h_{ki}^P\sum^{M}_{m=1}\tilde{\rho}_{mi}^S+c_k\leq  0.
\end{align}
This conclusion  is also expected as explained in the following. Recall that $c_k =  \frac{1}{h_{kk}^P }\sum^{K}_{i=1,i\neq k} h_{ki}^P\rho_i^P-\frac{  \rho_k^P}{2^{\bar{R}_k^P}-1}+\frac{\sigma^2}{h_{kk}^P }$, and hence   $c_k> 0$ is equivalent to the following:
\begin{align}
\log\left(1+\frac{  h_{kk}^P \rho_k^P}{\sum^{K}_{i=1,i\neq k} h_{ki}^P\rho_i^P+ \sigma^2}\right) <\bar{R}_k^P,
\end{align}
which means that ${\rm U}^P_k$'s QoS requirement cannot be satisfied even if no secondary user is served on beam $\mathbf{f}_k$. Or in other words, if $c_k>0$, beam $\mathbf{f}_k$ is not available to any secondary users.  Similarly, by using the constraint in \eqref{new3}, one can conclude that $b_{jk}>0$ leads to  $\tilde{\rho}^S_{jk}=0$ which also means $s_{jk}=0$ .

In order to use the two conclusions for reducing  the number of the optimization variables, first build the following set:
\begin{align}
\mathcal{S} = \left\{\{j,k\}\left | b_{jk} \leq 0, c_k\leq 0 \right.\right\}.
\end{align} 
Based on the previous discussions,  $\tilde{\rho}_{jk}=0$ and $s_{jk}=0$ , if $\{j,k\}\not\in\mathcal{S}$, and hence there is no need to optimize these variables. 
Therefore, by using the set $\mathcal{S}$,   problem \ref{pb:3} can be equivalently expressed  as follows: 
  \begin{problem}\label{pb:4} 
  \begin{alignat}{2}
\underset{\rho_{jk}^S, \forall\{j,k\}\in \mathcal{S}}{\rm{max}}  &\quad    
  \sum_{\forall \{j,k\}\in\mathcal{S}}    \log\left( 1+\frac{ h_{jk}^S   \tilde{\rho}_{jk}^S}{ \xi h_{jk}^S\sum^{M}_{m=1,m\neq j}\tilde{\rho}_{mk}^S + \sum^{K}_{i=1,i\neq k} h_{ji}^S \sum^{M}_{m=1}\tilde{\rho}_{mi}^S  +t_{jk}}
\right) \label{4tobj:1} \\
\rm{s.t.} & \quad    
 \tilde{\rho}_{jk}^S+\frac{1}{h_{kk}^P }\sum^{K}_{i=1,i\neq k} h_{ki}^P\sum^{M}_{m=1}\tilde{\rho}_{mi}^S+c_k
 \leq 0, \forall \{j,k\}\in\mathcal{S}\label{4ttst:1}
\\ &\quad {\rm sign} (\tilde{\rho}_{jk}) \left( \tilde{\rho}_{jk}^S +\frac{1}{h_{jk}^S}\sum^{K}_{i=1,i\neq k} h_{ji}^S \sum^{M}_{m=1} \tilde{\rho}_{mi}^S   +b_{jk} \right)\leq 0 , \forall \{j,k\}\in\mathcal{S} \label{4ttst:2}
\\
&\quad  \sum_{\forall \{j,k\}\in\mathcal{S}}\tilde{\rho}_{jk}\leq P^{\rm max}.\label{4ttst:3}
  \end{alignat}
\end{problem}

\subsection{Reformulating the Problem into a Matrix-Based Form }   In this subsection, problem \ref{pb:4} will be reformulated into a matrix form  to facilitate the application of the  BB and SCA methods.    
  Recall that the set, $\mathcal{S}$, contains the indices of the secondary users whose transmit powers can be non-zero and need to be optimized, i.e., if $\{j,k\}\in \mathcal{S}$, secondary user ${\rm U}_j^S$ can be scheduled on  beam $\mathbf{f}_k$. Denote the size of $\mathcal{S}$ by $|\mathcal{S}|$. By using $\mathcal{S}$, one can build a $|\mathcal{S}|\times 2$ matrix, denoted by $\mathbf{S}$, where  the two elements on the $i$-th row of $\mathbf{S}$ are the $i$-th element of $\mathcal{S}$. For example, if $\mathcal{S}=\{\{1,1\},\{1,2\},\{2,2\}\}$, $\mathbf{S}=\begin{bmatrix}  1&1&2\\ 1& 2&2\end{bmatrix}^T$. 
  
  Furthermore, denote the element on the $p$-th row and $q$-th column of a matrix $\mathbf{S}$ by $\mathbf{S}_{pq}$.  
 Define $y_p=\tilde{\rho}_{\mathbf{S}_{p1}\mathbf{S}_{p2}}$, and $\mathbf{y}=\begin{bmatrix}y_1 &\cdots &y_{|\mathcal{S}|} \end{bmatrix}^T$ which collects all the variables to be optimized. Define    $\tilde{\boldsymbol \rho}$ as a $KM\times 1$ vector collecting the original $KM$ variables, $\tilde{\rho}_{jk}$, i.e.,
 \begin{align}
 \tilde{\boldsymbol \rho} = \begin{bmatrix}
 \tilde{\rho}_{11}&\cdots  &\tilde{\rho}_{M1} &\cdots & \tilde{\rho}_{1K}&\cdots & \tilde{\rho}_{MK} 
 \end{bmatrix}^T.
 \end{align}
 Furthermore, define $\mathbf{R}$ as a $  KM \times |\mathcal{S}|$ mapping matrix to ensure $\tilde{\boldsymbol \rho}=\mathbf{R}\mathbf{y} $, where $\mathbf{R}$ can be built as follows.   $\mathbf{R}$ is an all zero matrix, except that   the  element on the $(M(\mathbf{S}_{p2}-1)+\mathbf{S}_{p1})$-th row and $p$-th column of $\mathbf{R}$, $1\leq p\leq  |\mathcal{S}|$, is set as one.  For the above example with $\mathcal{S}=\{\{1,1\},\{1,2\},\{2,2\}\}$, $\mathbf{y}=\begin{bmatrix}y_1 &y_2 &y_3 \end{bmatrix}^T$, $y_1=\tilde{\rho}_{\mathbf{S}_{11}\mathbf{S}_{12}}=\tilde{\rho}_{11}$, $y_1=\tilde{\rho}_{\mathbf{S}_{21}\mathbf{S}_{22}}=\tilde{\rho}_{12}$, $y_2=\tilde{\rho}_{\mathbf{S}_{31}\mathbf{S}_{32}}=\tilde{\rho}_{22}$, $ \tilde{\boldsymbol \rho} = \begin{bmatrix}
 \tilde{\rho}_{11}&   \tilde{\rho}_{21}  & \tilde{\rho}_{12}& \tilde{\rho}_{22} 
 \end{bmatrix}^T$, and $ \mathbf{R} $ is defined as follows:
 \begin{align}
 \mathbf{R} = \begin{bmatrix}  1 &0&0\\0 &0&0 \\ 0&1&0\\ 0 &0&1
 \end{bmatrix}.
 \end{align}
 
 By using the above definitions, problem \ref{pb:4} can be equivalently recast as follows:
  \begin{problem}\label{pb:5} 
  \begin{alignat}{2}
\underset{\mathbf{y}}{\rm{max}}  &\quad    
  \sum_{p=1} ^{|\mathcal{S}|}   \log\left( 1+\frac{ \mathbf{c}_p^T\mathbf{y}}{ \mathbf{d}_p^T\mathbf{R}\mathbf{y}  +t_{\mathbf{S}_{p1}\mathbf{S}_{p2}}}
\right) \label{5tobj:1} \\
\rm{s.t.} & \quad    
\mathbf{e}_p^T\mathbf{R}\mathbf{y}+c_{\mathbf{S}_{p2}}
 \leq 0, 1\leq p\leq |\mathcal{S}|\label{5ttst:1}
\\ &\quad  {\rm sign}(y_p) \left(\mathbf{f}_p ^T\mathbf{R}\mathbf{y}  +b_{\mathbf{S}_{p1}\mathbf{S}_{p2}}\right)\leq 0 , 1\leq p\leq |\mathcal{S}| \label{5ttst:2}
\\
&\quad \mathbf{1}_{|\mathcal{S}|\times 1}^T\mathbf{y}\leq P^{\rm max}.\label{5ttst:3}
  \end{alignat}
\end{problem}
where  $\mathbf{c}_p$ is a $|\mathcal{S}|\times1 $ vector as follows:
\begin{align}
\mathbf{c}_p = \begin{bmatrix}\mathbf{0}_{1\times (p-1) }&h_{\mathbf{S}_{p1}\mathbf{S}_{p2}}^S   & \mathbf{0}_{1\times (|\mathcal{S}|-p)} \end{bmatrix}^T,
\end{align}
$\mathbf{d}_p$ is an $MK\times1 $ vector as follows: 
\begin{align}
\mathbf{d}_p =  \begin{bmatrix} h_{\mathbf{S}_{p1} 1}^S \mathbf{1}_{1\times M}&\cdots &\xi h_{\mathbf{S}_{p1} \mathbf{S}_{p2} }^S\tilde{\mathbf{1}}_{1\times M}^{\mathbf{S}_{p1}} &\cdots  &h_{\mathbf{S}_{p1} K}^S \mathbf{1}_{1\times M} \end{bmatrix}^T,
\end{align} 
$\tilde{\mathbf{1}}_{1\times M}^{\mathbf{S}_{p1}} $ is a $1\times M$ all-one vector except that its $\mathbf{S}_{p1}$-th element is zero, 
$\mathbf{e}_p$ is an $MK\times1 $ vector as follows:
\begin{align}
\mathbf{e}_p = \begin{bmatrix} \frac{h_{\mathbf{S}_{p2} 1}^P}{h_{\mathbf{S}_{p2} \mathbf{S}_{p2} }^P} \mathbf{1}_{1\times M}&\cdots & \tilde{\mathbf{0}}^{\mathbf{S}_{p1} }_{1\times M} &\cdots &\frac{h_{\mathbf{S}_{p2} K}^P}{h_{\mathbf{S}_{p2} \mathbf{S}_{p2} }^P} \mathbf{1}_{1\times M}\end{bmatrix}^T,
\end{align} 
  $ \tilde{\mathbf{0}}^{\mathbf{S}_{p1} }_{1\times M}$ is a $1\times M$ all-zero vector except that its $\mathbf{S}_{p1}$-th element is one, and 
$\mathbf{f}_p$ is an $MK\times1 $ vector as follows:
\begin{align}
\mathbf{f}_p = \begin{bmatrix} \frac{h_{\mathbf{S}_{p1} 1}^S}{h_{\mathbf{S}_{p1} \mathbf{S}_{p2} }^S} \mathbf{1}_{1\times M}&\cdots & \tilde{\mathbf{0}}^{\mathbf{S}_{p1} }_{1\times M} &\cdots & \frac{h_{\mathbf{S}_{p1} K}^S}{h_{\mathbf{S}_{p1} \mathbf{S}_{p2} }^S} \mathbf{1}_{1\times M}\end{bmatrix}^T. 
\end{align} 
The concise expression shown in \eqref{pb:5} provides   insight to the considered joint beam and power allocation problem. For example,  the constraints in \eqref{5ttst:1} and \eqref{5ttst:3} are based on simple affine functions. However,  the constraint in \eqref{5ttst:2} is not in a convex form due to the involvement of   ${\rm sign}$. In addition, the objective function is also not in a concave form as its logarithm  terms contain ratios of linear functions.

\section{Optimal and Suboptimal Solutions for Joint Beam and Power Allocation}
In this section, two algorithms with different tradeoffs between performance and complexity are developed.

\subsection{Applying the Branch and Bound Method}
The BB method can be viewed as a type of structured search, where the feasibility region of the optimization problem is divided into smaller regions and a search for the optimal solution is carried out by focusing on those regions which are more promising and removing (pruning) those unlike ones. In the following,  the considered optimization problem is first reformulated    to facilitate the application of the BB method, and then the issue to find the upper and lower bounds on the optimal value of the considered optimization problem is focused. 

\subsubsection{Implementation  of the BB method} Recall that the BB method  can be ideally applied to the optimization problem with the following two features \cite{5765556,8170332}. One feature is that the feasibility region of the optimization problem can be expressed as   a multi-dimensional rectangle, as it can be straightforwardly partitioned. The other feature is that the lower and upper bounds on the objective function for a partitioned feasibility region can be straightforwardly found. 
In order to recast the considered optimization problem to a form with the aforementioned two features, by introducing auxiliary variables, $x_p$, $1\leq p\leq |\mathcal{S}|$, problem \ref{pb:5} can be first rewritten as follows:
 \begin{problem}\label{pb:6} 
  \begin{alignat}{2}
\underset{\mathbf{y},x_p}{\rm{min}}  &\quad    
 f(\mathbf{x})=- \sum_{p=1} ^{|\mathcal{S}|}   \log\left( 1+x_p
\right) \label{6tobj:1} \\
\rm{s.t.} & \quad    
x_p\leq\frac{ \mathbf{c}_p^T\mathbf{y}}{ \mathbf{d}_p^T\mathbf{R}\mathbf{y}  +t_{\mathbf{S}_{p1}\mathbf{S}_{p2}}} \label{6ttst:1}, 1\leq p\leq |\mathcal{S}|
\\ &\quad \eqref{5ttst:1}, \eqref{5ttst:2}, \eqref{5ttst:3}.
  \end{alignat}
\end{problem}
Problem \ref{pb:6} facilitates the application of the BB method since its feasibility region with respect to $x_p$, i.e., \eqref{6ttst:1}, is a simple multi-dimensional rectangle.  By defining $\mathbf{x}=\begin{bmatrix}x_1&\cdots &x_{|\mathcal{S}|} \end{bmatrix}$ and absorbing the variables, $\mathbf{y}$, into the feasibility region, 
problem \ref{pb:6} can be further rewritten as follows:
\begin{problem}\label{pb:7} 
  \begin{alignat}{2}
\underset{ \mathbf{x}}{\rm{min}}  &\quad    
 f(\mathbf{x})  \quad 
\rm{s.t.}  \quad    
\mathbf{x}\in \mathcal{G}. 
  \end{alignat}
\end{problem}
where 
\begin{align}
\mathcal{G}=\left\{\mathbf{x}|\eqref{6ttst:1}, \eqref{5ttst:1}, \eqref{5ttst:2}, \eqref{5ttst:3}\right\}.
\end{align}
As to be shown in the next subsection, the task to find the lower and upper on the optimal value can be simplified if  the objective function of the considered optimization problem is a monotonically decreasing  function, which motivates     the objective function to be rewritten as follows:
\begin{align}\label{new object}
 \tilde{f}(\mathbf{x})=\left\{\begin{array}{ll}- \sum_{p=1} ^{|\mathcal{S}|}   \log\left( 1+x_p
\right),& {\rm if} \quad \mathbf{x}\in \mathcal{G}\\
0, &{\rm otherwise}\end{array}\right..
\end{align}
It is straightforward to verify that $ \tilde{f}(\mathbf{x})$ is  a monotonically decreasing function of $\mathbf{x}$, and problem \ref{pb:7} can be expressed as follows:
\begin{problem}\label{pb:8} 
  \begin{alignat}{2}
\underset{ \mathbf{x}}{\rm{min}}  &\quad    
 \tilde{f}(\mathbf{x})  \quad 
\rm{s.t.}  \quad    
\mathbf{x}\in \mathcal{G}. 
  \end{alignat}
\end{problem}
As shown in \cite{8170332}, problem \ref{pb:8} yields   the same optimal solution as problem \ref{pb:7}. Problem \ref{pb:8} can be solved efficiently by applying the BB method, as shown  in Algorithm \ref{algorithm}.

 \begin{algorithm}[t]
\caption{Branch and Bound Algorithm}

 \begin{algorithmic}[1]
 
\State Set $\mathcal{B}_0=\{\mathcal{D}_0\}$ and   tolerance $\epsilon$,  $k=0$,  $U_0=\phi^{\rm up}(\mathcal{D}_0)$, $L_0=\phi^{\rm lb}(\mathcal{D}_0)$, and $\Delta=U_0-L_0$
\While { $\Delta\geq \epsilon$ }
\State $k=k+1$. 
\State  Find $\mathcal{D}\in\mathcal{B}_{k-1}$ which yields the smallest lower bound

\State Split $\mathcal{D}$ along its longest edge into $\mathcal{D}_{1}$ and $\mathcal{D}_2$

\State Construct $\mathcal{B}_k=\{\mathcal{D}_1\cup \mathcal{D}_1 \cup (\mathcal{B}_{k-1}\backslash \mathcal{D}) \}$

\State Find the updated upper bound $U_k=\min \phi^{\rm up}(\mathcal{D})$, $\forall \mathcal{D}\in \mathcal{B}_k$.  

\State Find the updated lower bound $L_k=\min \phi^{\rm lb}(\mathcal{D})$, $\forall \mathcal{D}\in \mathcal{B}_k$.  

\State  Update $\Delta$ by using $\Delta=U_k-L_k$

\State Remove those $\mathcal{D}$ in $\mathcal{B}_k$ whose lower bounds are larger than $U_k$.

 \EndWhile
\State \textbf{end}
 \end{algorithmic}\label{algorithm}
\end{algorithm}

As shown  in Algorithm \ref{algorithm}, the initialization of the BB method needs to define an initial search region, which is a $|\mathcal{S}|$-dimensional rectangle, denoted by $\mathcal{D}_0$ and defined as follows:
\begin{align}
\mathcal{D}_0 = \{\mathbf{x}|0\leq x_p\leq x_{p}^B\},
\end{align}
where $x_{p}^B = \frac{ P_{\max}\mathbf{c}_p^T\mathbf{1}_{|\mathcal{S}|\times 1}  }{  t_{\mathbf{S}_{p1}\mathbf{S}_{p2}}}$ denotes  the upper bound on $x_p$. 

During the $k$-th iteration, the rectangle which yields the smallest lower bound among all the   rectangles  in the set, $\mathcal{B}_{k-1}$,   is located and partitioned into two smaller rectangles along its longest edge, denoted by $\mathcal{D}_1$ and $\mathcal{D}_2$, respectively. For each rectangle, $\mathcal{D}_i$, find   the upper and lower bounds on the optimal value of the considered optimization problem, denoted by $\phi^{\rm up}(\mathcal{D}_i)$ and $\phi^{\rm lb}(\mathcal{D}_i)$, respectively.  More discussions will be provided in the next subsection for calculating $\phi^{\rm up}(\mathcal{D})$ and $\phi^{\rm lb}(\mathcal{D})$. The overall lower and upper bounds on the optimal value can be updated iteratively   as shown in Algorithm \ref{algorithm}, where this updated  upper bound can also be used to remove those rectangles whose lower bounds are larger than the new upper bound. The algorithm terminates when the difference between the overall lower and upper bounds is smaller than the given tolerance parameter, denoted by $\epsilon$. 

\subsubsection{Finding the upper and lower bounds $\phi^{\rm up}(\mathcal{D})$ and $\phi^{\rm lb}(\mathcal{D})$}

For each rectangle, $\mathcal{D}$, denote its maximum and minimum vertices by $\mathbf{x}_{\max}$ and $\mathbf{x}_{\min}$, respectively, i.e., $\mathbf{x}_{\min}\leq \mathbf{x}\leq  \mathbf{x}_{\max}$ for $\mathbf{x}\in \mathcal{D}$. By using the fact that $\tilde{f}(\mathbf{x})$ is a monotonically decreasing function of $\mathbf{x}$, the   lower and upper bounds,   $\phi^{\rm lb}(\mathcal{D})$ and $\phi^{\rm up}(\mathcal{D})$, can be calculated as follows:
\begin{align}\label{low bound1}
\phi^{\rm lb}(\mathcal{D}) = \left\{\begin{array}{ll}
f(\mathbf{x}_{\max}), &{\rm if }\quad \mathbf{x}_{\min}\in \mathcal{G}\\
0, &{\rm otherwise }
\end{array}\right.,
\end{align}
and
\begin{align}\label{upper bound1}
\phi^{\rm up}(\mathcal{D}) = \left\{\begin{array}{ll}
f(\mathbf{x}_{\min}), &{\rm if }\quad \mathbf{x}_{\min}\in \mathcal{G}\\
0, &{\rm otherwise }
\end{array}\right..
\end{align}
The rationale behind the  bounds shown in \eqref{low bound1} and \eqref{upper bound1} is that, if $\mathbf{x}_{\min}$ is feasible,  $\mathbf{x}_{\min}$   yields the maximal value for the objective function since it is the minimum element in $\mathcal{D}$, whereas $\mathbf{x}_{\max}$ is the maximum element in $\mathcal{D}$ and can be used to find a (not necessarily achievable) lower bound on the objective function. If    $\mathbf{x}_{\min}$ is not feasible, the rectangle is located outside of the feasibility region and hence should be pruned (removed), where the use of zero for the upper and lower bounds in this case can realize this goal. The task to verify whether $\mathbf{x}_{\min}\in \mathcal{G}$, i.e., $\mathbf{x}_{\min}$ is feasible, can be accomplished by carrying out the following feasibility study
  \begin{problem}\label{pb:9} 
  \begin{alignat}{2}
\underset{\mathbf{y}}{\rm{max}}  &\quad    1\label{9tobj:1} \\
\rm{s.t.} & \quad     \left( \mathbf{c}_p^T-x_p \mathbf{d}_p^T\mathbf{R}\right)\mathbf{y}\geq   x_{\min, p} t_{\mathbf{S}_{p1}\mathbf{S}_{p2}}\label{9ttst:1}
  \\&\quad 
\mathbf{e}_p^T\mathbf{R}\mathbf{y}+c_{\mathbf{S}_{p2}}
 \leq 0, 1\leq p\leq |\mathcal{S}|\label{9ttst:2}
\\ &\quad  {\rm sign}(y_p) \left(\mathbf{f}_p ^T\mathbf{R}\mathbf{y}  +b_{\mathbf{S}_{p1}\mathbf{S}_{p2}}\right)\leq 0 , 1\leq p\leq |\mathcal{S}| \label{9ttst:3}
\\
&\quad \mathbf{1}_{|\mathcal{S}|\times 1}^T\mathbf{y}\leq P^{\rm max}\label{9ttst:4},
  \end{alignat}
\end{problem}
where $x_{\min,p}$ is the $p$-th element of $\mathbf{x}_{\min}$. 
Note that the use of ${\rm sign}(y_p) $ makes constraint \eqref{9ttst:3} not convex. It is important to point out that $y_p=0$ and $x_{\min, p}=0$ are equivalent, i.e.,  $y_p=0$ leads to $x_{\min, p}=0$ and vice versa. By using this observation,  problem \ref{pb:9} can be recast as the following equivalent  form:
 \begin{problem}\label{pb:10} 
  \begin{alignat}{2}
\underset{\mathbf{y}}{\rm{max}}  &\quad    1\label{10tobj:1} \\
\rm{s.t.} & \quad   {\rm sign}(x_{\min,p}) \left(\mathbf{f}_p ^T\mathbf{R}\mathbf{y}  +b_{\mathbf{S}_{p1}\mathbf{S}_{p2}}\right)\leq 0 , 1\leq p\leq |\mathcal{S}| \label{10ttst:1}
\\
&\quad \eqref{9ttst:1},\eqref{9ttst:2},\eqref{9ttst:4}.
  \end{alignat}
\end{problem}
Note that in problem \ref{pb:10}, $\mathbf{x}_{\min}$ is not an optimization variable and hence   ${\rm sign}(x_{\min, p})$ is  a constant, which means that constraint \eqref{10ttst:1} is an affine function. Since all the constraint functions  of problem \ref{pb:10} are affine, problem \ref{pb:10} can be solved efficiently by applying those off-shelf optimization solvers. 
 
Because the convergency of the BB method depends on how tight the upper and lower bounds are. Therefore, the bounds shown in  \eqref{low bound1} and \eqref{upper bound1} will be further tightened, by using the steps shown in \cite{5765556}.  
  
\subsubsection{Tightening the upper and lower bounds} In this section, we will focus on the case that $\mathbf{x}_{\min}\in\mathcal{G}$, otherwise the corresponding rectangle will be eventually  pruned. Furthermore, we assume that $\mathbf{x}_{\max}\notin\mathcal{G}$, otherwise the bounds shown in \eqref{low bound1} and \eqref{upper bound1}  are tight.  In  \eqref{low bound1}, a lower bound is obtained by directly using $\mathbf{x}_{\max}$ which is often   far away from the boundary of the feasibility region. The key idea for getting a lower bound tighter than \eqref{low bound1} is to find a new vector, denoted by $\tilde{\mathbf{x}}_{\max}$, which is closer to the boundary of the feasibility region than  $\mathbf{x}_{\max}$.  In particular, the $p$-th element of  $\tilde{\mathbf{x}}_{\max}$, denoted by $\tilde{ {x}}_{\max,p}$, is obtained by finding its maximal value if all the other elements of $\tilde{\mathbf{x}}_{\max}$ are the same as those of $ {\mathbf{x}}_{\min}$, which is to solve the following optimization problem  
 \begin{problem}\label{pb:11} 
  \begin{alignat}{2}
\underset{\mathbf{y} }{\rm{max}}  &\quad    
\tilde{ {x}}_{\max,p}\triangleq  \frac{ \mathbf{c}_p^T\mathbf{y}}{ \mathbf{d}_p^T\mathbf{R}\mathbf{y}  +t_{\mathbf{S}_{p1}\mathbf{S}_{p2}}} \label{11tobj:1} \\
\rm{s.t.} & \quad    \frac{ \mathbf{c}_p^T\mathbf{y}}{ \mathbf{d}_p^T\mathbf{R}\mathbf{y}  +t_{\mathbf{S}_{p1}\mathbf{S}_{p2}}}\leq  x_{\max, p}\\&\quad 
x_{\min, i}= \frac{ \mathbf{c}_i^T\mathbf{y}}{ \mathbf{d}_i^T\mathbf{R}\mathbf{y}  +t_{\mathbf{S}_{i1}\mathbf{S}_{i2}}}, i\neq p,1\leq i\leq |\mathcal{S}|\\&\quad 
 \eqref{9ttst:2}, \eqref{10ttst:1},\eqref{9ttst:4}.
  \end{alignat}
\end{problem}
Problem \ref{pb:11} can be rewritten as follows:
 \begin{problem}\label{pb:12} 
  \begin{alignat}{2}
\underset{\mathbf{y} }{\rm{max}}  &\quad    
 \frac{ \mathbf{c}_p^T\mathbf{y}}{ \mathbf{d}_p^T\mathbf{R}\mathbf{y}  +t_{\mathbf{S}_{p1}\mathbf{S}_{p2}}} \label{12tobj:1} \\
\rm{s.t.} & \quad    \left( \mathbf{c}_p^T-x_{\max,p}  \mathbf{d}_p^T\mathbf{R}\right)\mathbf{y}  \leq   x_{\max,p}  t_{\mathbf{S}_{p1}\mathbf{S}_{p2}}\\&\quad 
\left(\mathbf{c}_i^T-x_{\min,i} \mathbf{d}_i^T\mathbf{R}\right)\mathbf{y}= x_{\min,i}  t_{\mathbf{S}_{i1}\mathbf{S}_{i2}}, i\neq p, 1\leq i\leq |\mathcal{S}|\\ &\quad 
\mathbf{e}_p^T\mathbf{R}\mathbf{y}\leq -c_{\mathbf{S}_{p2}}
 , 1\leq p\leq |\mathcal{S}|\label{12ttst:1}
\\ &\quad x_{\min, p}   \mathbf{f}_p ^T\mathbf{R}\mathbf{y}  \leq -x_{\min, p}  b_{\mathbf{S}_{p1}\mathbf{S}_{p2}} , 1\leq p\leq |\mathcal{S}|\label{12ttst:2}
\\
&\quad \mathbf{1}_{|\mathcal{S}|\times 1}^T\mathbf{y}\leq P^{\rm max}.\label{12ttst:3}
  \end{alignat}
\end{problem}
Note that all the constraints of problem \ref{pb:12} are affine, and hence they can be grouped to yield the following more concise form:
 \begin{problem}\label{pb:13} 
  \begin{alignat}{2}
\underset{\mathbf{y} }{\rm{max}}  &\quad    
 \frac{ \mathbf{c}_p^T\mathbf{y}}{ \mathbf{d}_p^T\mathbf{R}\mathbf{y}  +t_{\mathbf{S}_{p1}\mathbf{S}_{p2}}} \label{13tobj:1} \\
\rm{s.t.} & \quad    \mathbf{A}_p \mathbf{y}\leq \mathbf{b}_p  \label{13ttst:1} 
\\&\quad \mathbf{A}_p^E \mathbf{y}= \mathbf{b}_p^E \label{13ttst:2} , \end{alignat}
\end{problem}
where the expressions for $\mathbf{A}_p$, $\mathbf{b}_p$, $\mathbf{A}_p^E$ and $\mathbf{b}_p^E$ can be straightforwardly obtained from problem \ref{pb:12} and are omitted due to space limitations. 

Define   $\tilde{\mathbf{A}}_p^E$ as a $(|\mathcal{S}|-1)\times (|\mathcal{S}|-1)$ square matrix obtained from $\mathbf{A}_p^E$ by removing its $p$-th column of $ {\mathbf{A}}_p^E$, denoted by   $\mathbf{a}_{p,p}^E$.  $\tilde{\mathbf{y}}_p$ is a  $(|\mathcal{S}|-1)\times 1$ vector obtained from $ {\mathbf{y}}$ by removing $y_p$. A close-form expression for the optimal solution of problem \ref{pb:13} can be obtained as follows.

\begin{lemma}\label{lemma1}
Assume that $\tilde{\mathbf{A}}_p^E$ is invertible and problem \ref{pb:13} is feasible. 
The optimal solution for problem \ref{pb:13}, denoted by  $ {\mathbf{y}}_p^*$, can be obtained as follows. First, the $p$-th element of  $ {\mathbf{y}}_p^*$ can be expressed as follows:
\begin{align}
y_p^*=\min\{\mathbf{a}_{\rm sign}\odot(\mathbf{b}_p -\tilde{\mathbf{A}}_p(\tilde{\mathbf{A}}_p^E)^{-1} \mathbf{b}_p^E)./ (\mathbf{a}_{p,p}-\tilde{\mathbf{A}}_p(\tilde{\mathbf{A}}_p^E)^{-1} \mathbf{a}_{p,p}^E )\},
\end{align}
where  $\mathbf{a}_{p,p}$ denotes the $p$-th column of $ \mathbf{A}_p$, $\tilde{\mathbf{A}}_p$ is obtained from $ \mathbf{A}_p$ by removing  $\mathbf{a}_{p,p}$, $\mathbf{a}_{\rm sign}=\max\left\{0,(\mathbf{a}_{p,p}-\tilde{\mathbf{A}}_p(\tilde{\mathbf{A}}_p^E)^{-1} \mathbf{a}_{p,p}^E )\right\}$,  $\odot$ and $./$ denote element-wise multiplication and division, respectively. Second, 
collect the remaining $(|\mathcal{S}|-1)$ elements of $ {\mathbf{y}}_p^*$ in the vector, denoted by    $  \tilde{\mathbf{y}}_p^*$, and $ \tilde{\mathbf{y}}_p^*$ can be obtained from $y_p^*$ as follows: $ \tilde{\mathbf{y}}_p^*=(\tilde{\mathbf{A}}_p^E)^{-1} (\mathbf{b}_p^E-\mathbf{a}_{p,p}^E  {y}_p^*)$. 
\end{lemma}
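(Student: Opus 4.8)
The plan is to solve problem \ref{pb:13} in closed form by exploiting that the $|\mathcal{S}|-1$ equality constraints $\mathbf{A}_p^E\mathbf{y}=\mathbf{b}_p^E$ leave only the single scalar $y_p$ free. First I would partition the equality system as $\tilde{\mathbf{A}}_p^E\tilde{\mathbf{y}}_p+\mathbf{a}_{p,p}^E y_p=\mathbf{b}_p^E$ and, using the assumed invertibility of $\tilde{\mathbf{A}}_p^E$, solve for $\tilde{\mathbf{y}}_p=(\tilde{\mathbf{A}}_p^E)^{-1}(\mathbf{b}_p^E-\mathbf{a}_{p,p}^E y_p)$; this is exactly the second displayed claim of the lemma and follows immediately. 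Substituting this affine-in-$y_p$ expression back into the objective and into $\mathbf{A}_p\mathbf{y}\leq\mathbf{b}_p$ reduces problem \ref{pb:13} to a one-dimensional linear-fractional program: maximize a ratio of two affine functions of $y_p$ subject to finitely many scalar linear inequalities in $y_p$.

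Next I would show the reduced objective is nondecreasing in $y_p$. Because $\mathbf{c}_p$ is nonzero only in its $p$-th entry, the numerator equals $h^S_{\mathbf{S}_{p1}\mathbf{S}_{p2}}y_p$, and because the entry of $\mathbf{d}_p$ multiplying $y_p$ is the zeroed-out one, the denominator $\mathbf{d}_p^T\mathbf{R}\mathbf{y}+t_{\mathbf{S}_{p1}\mathbf{S}_{p2}}$ becomes an affine function $\beta_0+\beta_1 y_p+t_{\mathbf{S}_{p1}\mathbf{S}_{p2}}$ only through the eliminated coordinates. Differentiating the resulting linear-fractional function, its derivative has the sign of $h^S_{\mathbf{S}_{p1}\mathbf{S}_{p2}}(\beta_0+t_{\mathbf{S}_{p1}\mathbf{S}_{p2}})$, which is nonnegative using $h^S>0$, $t_{\mathbf{S}_{p1}\mathbf{S}_{p2}}>0$, and the nonnegativity of the residual interference term on the feasible set. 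Hence the optimal $\mathbf{y}_p^*$ is attained at the largest feasible value of $y_p$.

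Then I would read that largest value off the scalar inequalities. With $\mathbf{g}\triangleq\mathbf{a}_{p,p}-\tilde{\mathbf{A}}_p(\tilde{\mathbf{A}}_p^E)^{-1}\mathbf{a}_{p,p}^E$ and $\mathbf{h}\triangleq\mathbf{b}_p-\tilde{\mathbf{A}}_p(\tilde{\mathbf{A}}_p^E)^{-1}\mathbf{b}_p^E$, the constraint $\mathbf{A}_p\mathbf{y}\leq\mathbf{b}_p$ becomes $g_\ell y_p\leq h_\ell$ for every $\ell$: rows with $g_\ell>0$ give the upper bounds $y_p\leq h_\ell/g_\ell$, rows with $g_\ell<0$ give lower bounds, and rows with $g_\ell=0$ only enter feasibility. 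The maximal feasible $y_p$ is therefore the minimum of the active upper bounds, which is precisely the stated formula: the mask $\mathbf{a}_{\rm sign}=\max\{0,\mathbf{g}\}$ selects the positive-coefficient rows, and $\mathbf{a}_{\rm sign}\odot\mathbf{h}./\mathbf{g}$ together with the outer $\min$ returns $\min_{\ell:\,g_\ell>0}h_\ell/g_\ell$, the entries with $g_\ell\le 0$ being annihilated or excluded. Back-substitution into $\tilde{\mathbf{y}}_p^*=(\tilde{\mathbf{A}}_p^E)^{-1}(\mathbf{b}_p^E-\mathbf{a}_{p,p}^E y_p^*)$ then recovers the remaining coordinates, and the feasibility assumption guarantees the admissible interval for $y_p$ is nonempty so that $y_p^*$ also satisfies the lower bounds.

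I expect the main obstacle to be bookkeeping rather than any deep idea: rigorously establishing the monotonicity of the linear-fractional objective (in particular verifying $\beta_0+t_{\mathbf{S}_{p1}\mathbf{S}_{p2}}\geq 0$ on the feasible region, and handling the degenerate case $y_p=0$, where the sign-based constraint \eqref{12ttst:2} deactivates so that the corresponding rows drop out consistently), and checking that the compact vectorized expression with $\mathbf{a}_{\rm sign}$, $\odot$ and $./$ faithfully encodes ``minimum of the upper bounds arising from the positive-coefficient rows'' while correctly ignoring the zero- and negative-coefficient rows. These are routine once the one-dimensional reduction above is in place.
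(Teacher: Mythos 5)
Your proposal follows essentially the same route as the paper's proof: use the $(|\mathcal{S}|-1)$ equality constraints and the invertibility of $\tilde{\mathbf{A}}_p^E$ to eliminate $\tilde{\mathbf{y}}_p$ as an affine function of $y_p$, substitute into the objective and the inequality constraints to obtain a one-dimensional linear-fractional program, invoke monotonicity of the reduced objective to conclude that the optimum sits at the largest feasible $y_p$, and read that value off as the minimum of the upper bounds coming from the positive-coefficient rows, followed by back-substitution. Your treatment is, if anything, slightly more explicit than the paper's (e.g., in tracking the sign condition on the denominator's constant term and in noting that the negative-coefficient rows only impose lower bounds absorbed by the feasibility assumption), but the decomposition and the key steps coincide.
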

\begin{proof}
See Appendix \ref{proof1}
\end{proof}

Among our conducted computer simulations, we notice that    $\tilde{\mathbf{A}}_p^E$ can be close to singular for a small number of  channel   realizations. For these rare cases, problem \ref{pb:13} can still be solved efficiently by first reformulating it as a linear programming problem and then applying those optimization solvers. In particular,   define $\mathbf{z}=\frac{  \mathbf{y}}{ \mathbf{d}_p^T\mathbf{R}\mathbf{y}  +t_{\mathbf{S}_{p1}\mathbf{S}_{p2}}} $ and $w=\frac{1}{ \mathbf{d}_p^T\mathbf{R}\mathbf{y}  +t_{\mathbf{S}_{p1}\mathbf{S}_{p2}}} $, which means that problem \ref{pb:13} can be recast as follows:
 \begin{problem}\label{pb:14} 
  \begin{alignat}{2}
\underset{\mathbf{y} }{\rm{max}}  &\quad    
 \mathbf{c}_p^T\mathbf{z}  \label{14tobj:1} \\
\rm{s.t.} & \quad    \mathbf{A}_p \mathbf{z}\leq w\mathbf{b}_p  \label{14ttst:1} 
\\&\quad \mathbf{A}_p^E \mathbf{z}= w\mathbf{b}_p^E \label{14ttst:2}
\\
&\quad \mathbf{d}_p^T\mathbf{R}\mathbf{z}+t_{\mathbf{S}_{p1}\mathbf{S}_{p2}} w=1 \label{14ttst:3}.
  \end{alignat}
\end{problem}
Problem \ref{pb:14} is in a linear programming form and hence can be directly solved by applying those off-shelf optimization solvers.  Once all the elements of $\tilde{\mathbf{x}}_{\max}$ are obtained, a tighter lower bound can be found by  replacing $ {\mathbf{x}}_{\max}$ with $\tilde{\mathbf{x}}_{\max}$ in \eqref{low bound1}. 

Interestingly, $\tilde{\mathbf{x}}_{\max}$ can also be used to tighten the upper bound.  Recall that the $p$-th element of $\tilde{\mathbf{x}}_{\max}$ is obtained by first assuming that the other elements of $\tilde{\mathbf{x}}_{\max}$ are equal to those in $ {\mathbf{x}}_{\min}$ and then solving problem \ref{pb:13}. Therefore, build the $|\mathcal{S}|$ vectors, denoted by $\tilde{\mathbf{x}}_{\min}^i$, $1\leq i \leq |\mathcal{S}|$, where each of the vectors is a $|\mathcal{S}|\times 1$ vector, and its $p$-th element, denoted by $\tilde{ {x}}_{\min,p}^i$, is given by 
\begin{align}
\left\{
\begin{array}{ll}
\tilde{ {x}}_{\min,p}^i = \tilde{ {x}}_{\max,p}&\text{if } p=i
\\
\tilde{ {x}}_{\min,p}^i =  { {x}}_{\min,p}&\text{if } p\neq i
\end{array}
\right..
\end{align}
  According to the steps to find $ \tilde{ {x}}_{\max,p}$, all the vectors, $\tilde{\mathbf{x}}_{\min}^i$, $1\leq i \leq |\mathcal{S}|$, are feasible, i.e., $\tilde{\mathbf{x}}_{\min}^i\in \mathcal{G}$. Therefore, these vectors can be used to form new upper bounds on the optimal value. In particular, by replacing   $ {\mathbf{x}}_{\min}^i$ with $\tilde{\mathbf{x}}_{\min}^i$, $1\leq i \leq |\mathcal{S}|$, in \eqref{upper bound1},  $|\mathcal{S}|$ new upper bounds can be obtained, where     the smallest one  can be used as the tightened upper bound. 

\subsection{Applying the Successive Convex Approximation Method}
Recall that the BB method is essentially a structured search, where many iterations are required in order to  divide the feasibility region into sufficiently small multi-dimensional rectangles. As a result, the computational complexity of the BB method can be significant, particularly in the case that the number of optimization variables, $|\mathcal{S}|$, is large, which motivates the use of the SCA method. 

In order to facilitate the application of SCA, problem \ref{pb:5} can be first  re-written as follows:
 \begin{problem}\label{pb:15} 
  \begin{alignat}{2}
\underset{\mathbf{y}}{\rm{max}}  &\quad    
  \sum_{p=1} ^{|\mathcal{S}|}  \left[ \log\left(   \mathbf{c}_p^T\mathbf{y}+ \mathbf{d}_p^T\mathbf{R}\mathbf{y}  +t_{\mathbf{S}_{p1}\mathbf{S}_{p2}}
\right)- \log\left(   \mathbf{d}_p^T\mathbf{R}\mathbf{y}  +t_{\mathbf{S}_{p1}\mathbf{S}_{p2}}
\right) \right] \label{15tobj:1} \\
\rm{s.t.} & \quad    
\mathbf{e}_p^T\mathbf{R}\mathbf{y}+c_{\mathbf{S}_{p2}}
 \leq 0, 1\leq p\leq |\mathcal{S}|\label{15ttst:1}
\\ &\quad  {\rm sign}(y_p) \left(\mathbf{f}_p ^T\mathbf{R}\mathbf{y}  +b_{\mathbf{S}_{p1}\mathbf{S}_{p2}}\right)\leq 0 , 1\leq p\leq |\mathcal{S}| \label{15ttst:2}
\\
&\quad \mathbf{1}_{|\mathcal{S}|\times 1}^T\mathbf{y}\leq P^{\rm max}.\label{15ttst:3}
  \end{alignat}
\end{problem}
To tackle the challenge that the objective function of problem \ref{pb:15} is not concave, auxiliary optimization variables, $z_p$, are introduced and problem \ref{pb:15} can be equivalently recast as follows:
 \begin{problem}\label{pb:16} 
  \begin{alignat}{2}
\underset{\mathbf{y},z_p}{\rm{max}}  &\quad    
  \sum_{p=1} ^{|\mathcal{S}|}   \log\left(   \mathbf{c}_p^T\mathbf{y}+ \mathbf{d}_p^T\mathbf{R}\mathbf{y}  +t_{\mathbf{S}_{p1}\mathbf{S}_{p2}}
\right)- \sum_{p=1} ^{|\mathcal{S}|}  z_p \label{16tobj:1} \\
\rm{s.t.} & \quad    \log\left(   \mathbf{d}_p^T\mathbf{R}\mathbf{y}  +t_{\mathbf{S}_{p1}\mathbf{S}_{p2}}
\right)\leq z_p, 1\leq p\leq |\mathcal{S}| \label{16ttst:1}
\\ &\quad  \eqref{15ttst:1}, \eqref{15ttst:2}, \eqref{15ttst:3}.
  \end{alignat}
\end{problem}
Note that  the constraint in \eqref{16ttst:1} is not convex, but it can be approximated by using the first order Taylor expansion, which means that problem \ref{pb:15} can be approximated  as follows: 
  \begin{problem}\label{pb:17} 
  \begin{alignat}{2}
\underset{\mathbf{y},z_p}{\rm{max}}  &\quad   \sum_{p=1} ^{|\mathcal{S}|}   \log\left(   \mathbf{c}_p^T\mathbf{y}+ \mathbf{d}_p^T\mathbf{R}\mathbf{y}  +t_{\mathbf{S}_{p1}\mathbf{S}_{p2}}
\right)- \sum_{p=1} ^{|\mathcal{S}|}  z_p \label{16tobj:1} \\
\rm{s.t.} & \quad   \log\left(  \mathbf{d}_p^T\mathbf{R}\mathbf{y}_0  +t_{\mathbf{S}_{p1}\mathbf{S}_{p2}}
\right) +   \frac{\mathbf{d}_p^T\mathbf{R}\left(\mathbf{y}-\mathbf{y}_0\right)}{\ln (2)\left(  \mathbf{d}_p^T\mathbf{R}\mathbf{y}_0  +t_{\mathbf{S}_{p1}\mathbf{S}_{p2}}\right)
}\leq z_p, \forall p  \label{17ttst:1}\\&
\quad  \eqref{15ttst:1}, \eqref{15ttst:2}, \eqref{15ttst:3}.
  \end{alignat}
\end{problem}
where $\mathbf{y}_0$ denotes an initial  estimate of $\mathbf{y}$  and can be iteratively updated.  
It is straightforward to verify that the objective function of problem \ref{pb:17} is concave, and the newly introduced constraint in \eqref{17ttst:1} is a simple affine function. 

The only remaining challenge to solve problem \ref{pb:17} is that constraint  \eqref{15ttst:2} is still not in a convex form due to the use of the sign function. In the following, two heuristic solutions, termed SCA-I and SCA-II, respectively, are proposed to reformulate problem \ref{pb:17} into a concave optimization form. SCA-I is to directly remove the sign function in \eqref{15ttst:2}, which leads to the following optimization problem:
 \begin{problem}\label{pb:18} 
  \begin{alignat}{2}
\underset{\mathbf{y},z_p}{\rm{max}}  &\quad   \sum_{p=1} ^{|\mathcal{S}|}   \log\left(   \mathbf{c}_p^T\mathbf{y}+ \mathbf{d}_p^T\mathbf{R}\mathbf{y}  +t_{\mathbf{S}_{p1}\mathbf{S}_{p2}}
\right)- \sum_{p=1} ^{|\mathcal{S}|}  z_p \label{16tobj:1} \\
\rm{s.t.} & \quad       \left(\mathbf{f}_p ^T\mathbf{R}\mathbf{y}  +b_{\mathbf{S}_{p1}\mathbf{S}_{p2}}\right)\leq 0 , 1\leq p\leq |\mathcal{S}|  \label{18ttst:1}\\&
\quad  \eqref{15ttst:1}, \eqref{15ttst:3}, \eqref{17ttst:1},
  \end{alignat}
\end{problem}
which is a typical concave maximization problem, and can be solved efficiently by applying the optimization solvers.  

SCA-II is motivated by the fact that the challenge in constraint \eqref{15ttst:2} is caused by  the use of the beam assignment indicator function. If   beam assignment is carried out before power allocation, this challenging issue can be avoided. Therefore, SCA-II  consists of two steps. The first step is to carry out user scheduling on each beam,  i.e., secondary user ${\rm U}_{j^*_k}^k$ is scheduled on beam $\mathbf{f}_k$ if $j^*_k = \underset{j}{\arg} \max \{h^S_{j,k}, \{j,k\}\in \mathcal{S}\}$.    The second step of SCA-II  is to update $S$ by including $\{j^*_k,k\}$, $1\leq k \leq K$, only, and then carry out power allocation, i.e., solving problem \ref{pb:18} in the same manner as    SCA-I.

 \section{Simulation Results}
 In this section, the computer simulation results are presented to evaluate the performance of THz-NOMA with joint beam and power allocation. Motivated by Lemma \ref{lemma0},    the greedy scheduling scheme is used as a benchmarking scheme for the BB and SCA methods. For all conducted simulations, $\rho_P=30$ dBm, $\sigma^2=-90$ dBm, $P_{\max}=30$ dBm, $\xi=10^8$,    $\alpha_{\rm PL}=2$, 
 $f_c=300$ GHz,  $\zeta=5e^{-3}$, $d=\frac{c}{2f_c}$, and  $\phi=0.1$,  as in \cite{jeffthz}. The primary users are   uniformly  deployed  within a square with its edge length $10$ m, where $\theta^P_k=\frac{i}{K}\pi-\frac{\pi}{2}$, $1\leq i\leq K$. The secondary users are also uniformly deployed within a square with its edge length denoted by $r_S$, where $\theta^S_j$ is uniformly distributed between $-\frac{\pi}{2}$ and $\frac{\pi}{2}$.  Recall that   the BB method can be viewed as a structured exhaustive search, and   its convergence  requires  a significant number of iterations, particularly if there are a large number of users.  A useful observation to reduce the complexity of the implementation of the BB method is that when sufficient iterations are carried out, the rectangles in $\mathcal{B}_k$ are already small enough to provide a good estimate for the optimal value.  Table I shows the effect of capping the number of iterations for the BB method, where $N_{itr}$ denotes the maximal  number of  iterations,    $N=10$, $K=4$, and $N_Q=10$. As can be seen from the table, capping the number of iterations does not cause a significant performance loss for the BB method, and hence $N_{itr}=200$ is used in the following conducted simulations.

\begin{table}
  \centering
  \caption{Impact of $N_{itr}$ on the Performance of the BB Method}
  \begin{tabular}{|c|c|c|c|c|c|}
\hline
&  $M=1$&$M=2$&$M=4$&$M=6$&$M=8$ \\
    \hline
   $N_{itr}=\infty$  &  $2.2805 $	&$4.04997$ &$5.7922$ &$6.9129$ &$7.8640$ \\
    \hline
   $N_{itr}=200$ & $ 2.2791$ &$3.8855$ &$5.7205$ &	$6.8343$  	&$7.4128$  \\\hline
  \end{tabular}\vspace{1em}\label{s}\vspace{-2em}
\end{table}

 \begin{figure}[t] \vspace{-0em}
\begin{center}
\subfigure[$\bar{R}_k=2.5$ BPCU]{\label{fig1a}\includegraphics[width=0.45\textwidth]{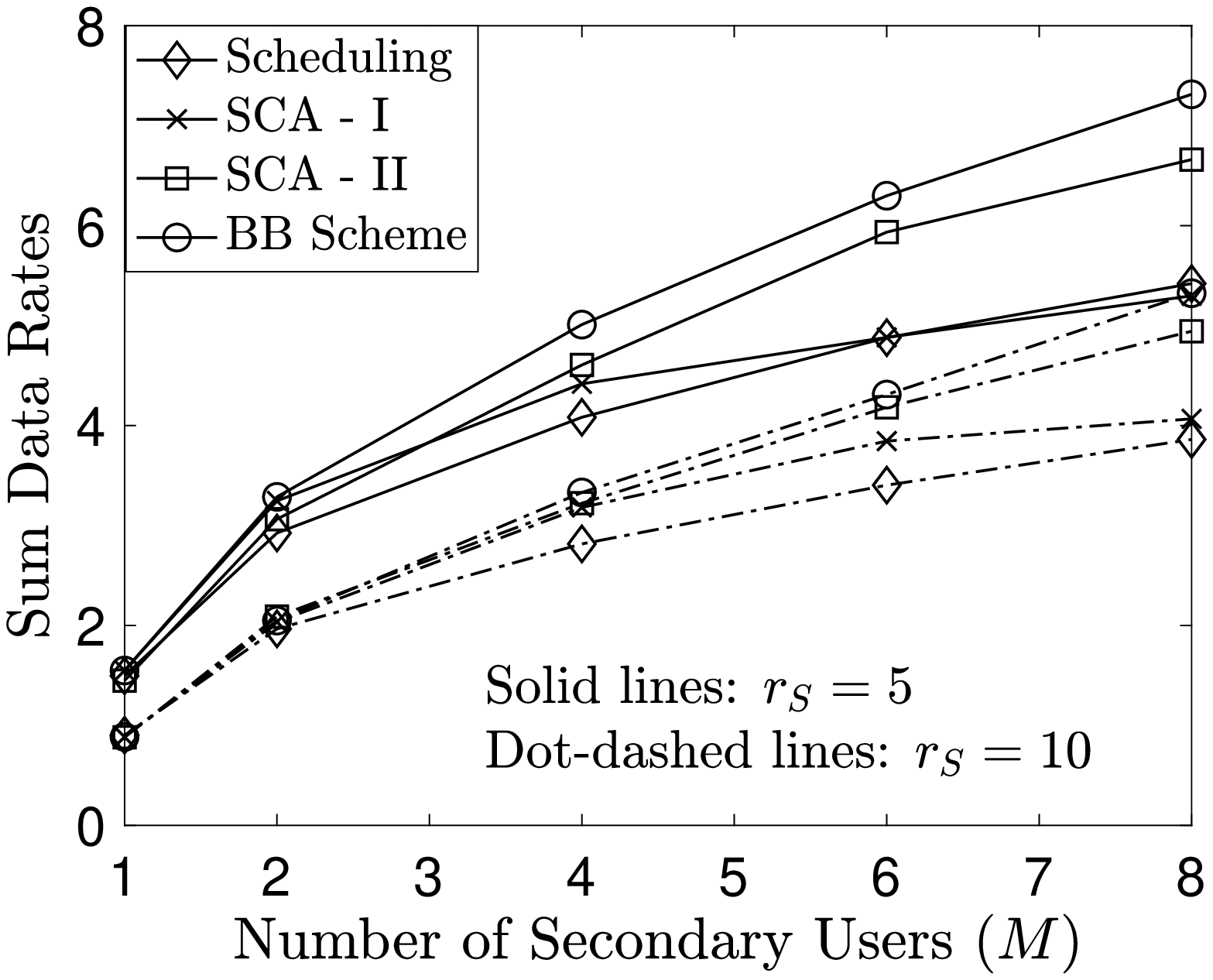}}\hspace{2em}
\subfigure[$\bar{R}_k=1$ BPCU]{\label{fig1b}\includegraphics[width=0.45\textwidth]{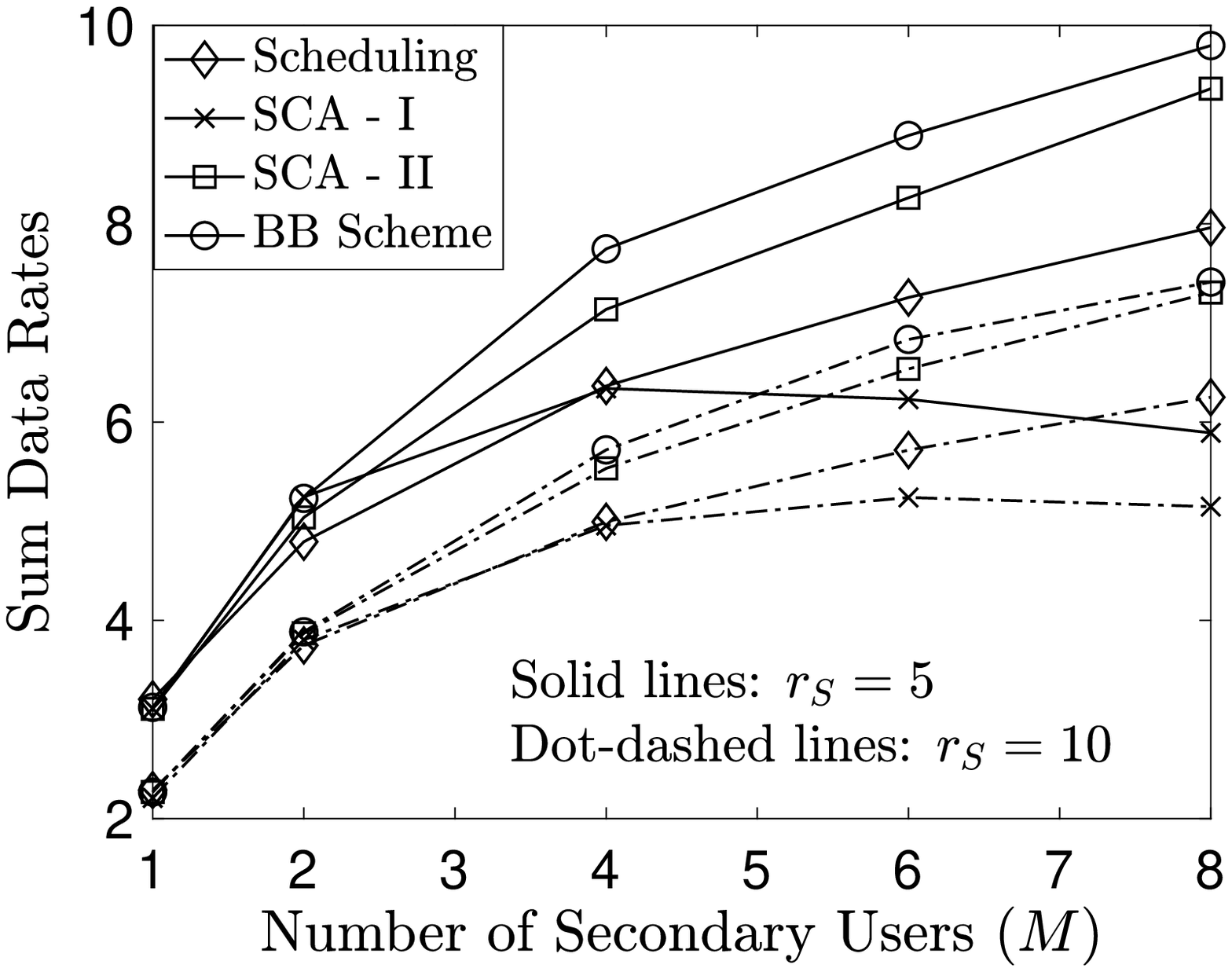}} \vspace{-1.5em}
\end{center}
\caption{ Impact of the number of secondary users on the performance of THz-NOMA.   $\alpha=2$,   $N=10$, $K=4$, $N_Q=10$. Hybrid beamforming is used.      \vspace{-1em} }\label{fig1}\vspace{-1.5em}
\end{figure}

 In Fig. \ref{fig1}, the impact of the number of secondary users on the performance of THz-NOMA is studied, where different choices of $r_S$ are used.  As can be seen from Fig. \ref{fig1}, the use of THz-NOMA can ensure that the secondary users are served on those existing beams with   significant data rates. This means that  the overall system throughput of THz networks can be significantly improved   compared to the case in which only the primary users are served.  Fig. \ref{fig1} also shows  that  the sum rate gain can be further increased by increasing $M$, or reducing $r_S$ and $\bar{R}_k$. Among the considered schemes, the BB method yields the best performance, since it is a structured search and is expected to provide the optimal performance. When there is a single secondary user,   the greedy scheduling scheme realizes the same performance as the BB method, which   confirms Lemma \ref{lemma0}. SCA-I is a naive application of the SCA method, and its performance can be even worse than the greedy scheduling scheme, particularly for the case with a larger $M$. SCA-II is the combination of user scheduling and SCA, and Fig. \ref{fig1} shows that the SCA-II can outperform the greedy scheduling scheme, and realize a performance close to the optimal BB method. It is important to point out that the convergence of SCA is much faster than the BB method, as shown in Fig. \ref{fig2}, which means that the complexity of SCA is much smaller than that of the BB method. In particular, Fig. \ref{fig2} demonstrates that SCA can converge within a single iteration, whereas the BB method can take hundreds of iterations to converge, even for the case with a moderate number of  users. 
    \begin{figure}[t]\centering \vspace{-2em}
    \epsfig{file=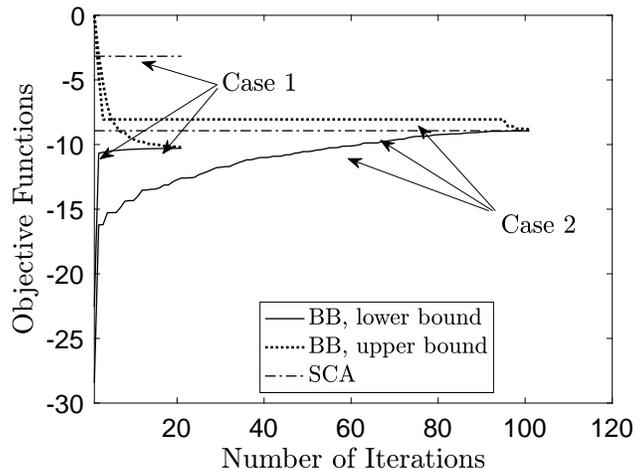, width=0.5\textwidth, clip=}\vspace{-0.5em}
\caption{ Illustration of the convergence of the SCA-II and BB schemes.  Two random realizations of channels are used. $N=10$, $K=4$, $M=8$, $R_k=2.5$ BPCU, $N_Q=10$ and $r_S=5$.
  \vspace{-1em}    }\label{fig2}   \vspace{-0.5em} 
\end{figure}

    \begin{figure}[t]\centering \vspace{-0em}
    \epsfig{file=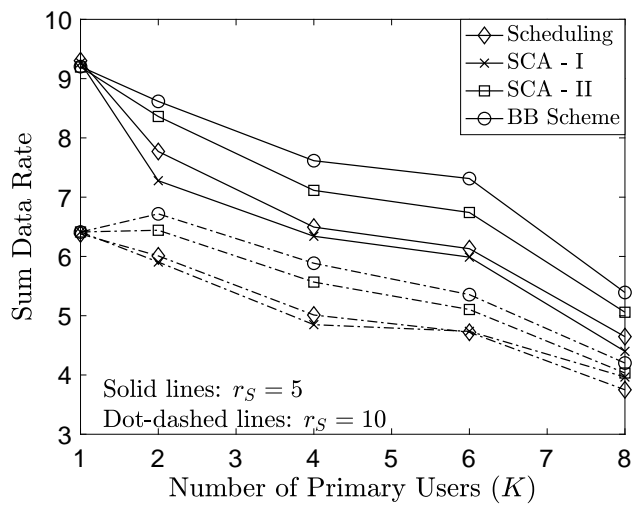, width=0.5\textwidth, clip=}\vspace{-0.5em}
\caption{ Impact of the number of primary users on the performance of THz-NOMA transmission.   $N=10$, $M=4$, $R_k=1$ BPCU, $N_Q=10$ and $r_S=10$.
  \vspace{-1em}    }\label{fig3}   \vspace{-2em} 
\end{figure}

In Fig. \ref{fig3}, the impact of the number of primary users on the performance of THz-NOMA transmission is studied. Recall that Fig. \ref{fig1} shows that inviting more secondary users to participate in THz-NOMA transmission can increase the overall sum rate, because a larger secondary user pool is helpful to improve    the effective channel gains of the scheduled secondary users. Intuitively, increasing $K$ should also be helpful to increase the sum rate, since there are more beams, i.e., there are more  bandwidth resources available. Fig. \ref{fig3} shows a surprising result that the performance of THz-NOMA is reduced when there are more primary users, which can be explained as follows. Unlike OFDMA subcarriers, the $K$ spatial  beams are not orthogonal bandwidth resources for the secondary users. In particular,   these beams have been    tailored to the primary users' channels in order to ensure that there is no inter-beam interference between the primary users. Because the secondary users' channels are different from  the primary users', the secondary users still experience inter-beam interference. This inter-beam interference can cause two types of performance degradation. First, each secondary user can suffer more interference from the primary users, if $K$ increases. Second, by increasing   $K$, more secondary users are scheduled, which further increases  interference in the network. 

    \begin{figure}[t]\centering \vspace{-2em}
    \epsfig{file=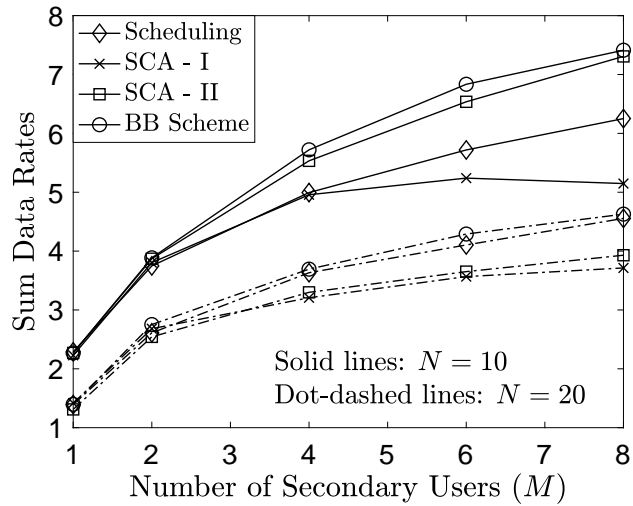, width=0.5\textwidth, clip=}\vspace{-0.5em}
\caption{ Impact of the number of antennas at the base station on the performance of THz-NOMA transmission.      $K=4$, $R_k=1$ BPCU, $N_Q=10$ and $r_S=10$.
  \vspace{-1em}    }\label{fig4}   \vspace{-1em} 
\end{figure}

    \begin{figure}[t]\centering \vspace{-0em}
    \epsfig{file=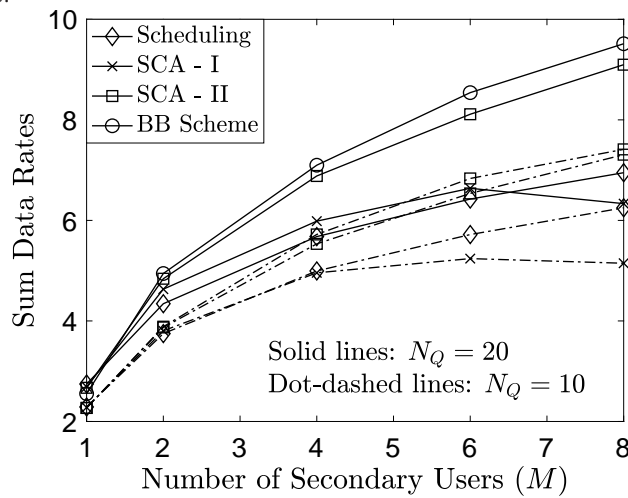, width=0.5\textwidth, clip=}\vspace{-0.5em}
\caption{ Impact of the   size of the beamsteering codebook   on the performance of THz-NOMA transmission.   $N=10$, $K=4$,$R_k=1$ BPCU,  and $r_S=10$.
  \vspace{-1em}    }\label{fig5}   \vspace{-2em} 
\end{figure}

In Fig. \ref{fig4}, the impact of the number of antennas at the base station on the performance of THz-NOMA networks is studied. As can be seen from the figure,  increasing the number of antennas at the base station reduces the sum rate achieved by THz-NOMA transmission. This reduction is expected and can be explained in the following. Recall that  the $K$ beams, $\mathbf{f}_k$, are designed to match the primary users' channel vectors.  By increasing $N$, both the users' channel vectors and the $K$ spatial beams become more directional, which makes it more challenging for a secondary user to find a matching   beam. This performance degradation can be mitigated  if the beams are designed by taking both the primary and secondary users' channels into consideration. 

Recall that in this paper, each analog beamforming vector, $\tilde{\mathbf{f}}_k$, is selected from a codebook with the limited size ($N_Q$). Fig. \ref{fig5} is provided to show the impact of this important system parameter, $N_Q$, on the performance of THz-NOMA transmission. In particular, Fig. \ref{fig5} shows that the performance gain of THz-NOMA is larger by using a smaller $N_Q$, which can be explained in the following. The value of $N_Q$ decides the resolution of analog beamforming. For example, $N_Q\rightarrow \infty$ means the use of perfect analog beamforming and $\tilde{\mathbf{f}}_k$ will be perfectly matched to the channel vector of primary user ${\rm U}_k^P$. Therefore, using analog beamforming with finite resolution, i.e., $N_Q$ is small, provides an opportunity that the analog beamforming vector,  $\tilde{\mathbf{f}}_k$, might not be perfect for  primary user ${\rm U}_k^P$ but potentially ideal for some secondary users. This observation that  analog beamforming with finite  resolution is beneficial for the implementation of NOMA is also consistent to the findings previously reported in \cite{7918554}. 
%Recall that analog beamforming only offers the benefit of reduced hardware costs, because the implementation of digital beamforming requires multiple   radio frequency (RF) chains. Therefore,   the impact of using   hybrid beamforming and analog beamforming only is studied in Fig. \ref{fig6}.  As can be seen from the figure, removing digital beamforming does not cause much performance degradation to THz-NOMA transmission, particularly if the BB scheme is used. 

%    \begin{figure}[t]\centering \vspace{-2em}
%    \epsfig{file=AF.eps, width=0.5\textwidth, clip=}\vspace{-0.5em}
%\caption{ Comparison between hybrid beamforming and analog beamforming with THz-NOMA transmission.   $N=10$, $K=4$,$R_k=1$ BPCU, $N_Q=10$,  and $r_S=10$.
%  \vspace{-1em}    }\label{fig6}   \vspace{-2em} 
%\end{figure}

 \section{Conclusions}
 This paper has considered the use of    NOMA as an add-on in THz networks. In particular, it was assumed that there exists a legacy THz system, where     spatial beams have been configured  to serve   legacy primary users. The aim of this paper was to investigate how   these pre-configured  spatial beams can be used to serve additional secondary users without degrading the performance of the legacy network. The considered beam and power allocation problem was first formulated as   a mixed combinatorial non-convex optimization problem, and then solved by two methods, one based on the BB method and the other based on SCA. Both analytical and simulation results have been presented to demonstrate  that these spatial beams can be used as a type of bandwidth resources to connect additional users and yield a significant throughput gain. Note that this gain is achieved  without degrading the performance of the legacy system or acquiring additional spectrum. However, unlike conventional bandwidth resources,   spatial beams are     non-orthogonal resources, and hence it is important to study how to suppress inter-beam interference, which is an important direction for future research.    
 \appendices
 
 \section{Proof for Lemma \ref{lemma0}} \label{proof0}
The assumption that $\bar{R}_k^P\rightarrow 0$ means that $c_k\rightarrow -\infty$ and $b_{1k}\rightarrow -\infty$. Therefore,  $P^{\max}=\min\{P^{\max}, -c_k, -b_{1k}\}$ and   the constraints in \eqref{2tst:1}, \eqref{2tst:11}, and \eqref{2tst:2} are always satisfied. By using  this assumption,   problem \ref{pb:2} can be approximated   at high SNR as follows:  
  \begin{problem}\label{pb:2z} 
  \begin{alignat}{2}
\underset{\rho_{k}}{\rm{max}} &\quad    
   \sum^{K}_{k=1} s_k\log\left( 1+\frac{ h_{k}^S   \rho_{k}^S}{ \sum^{K}_{i=1,i\neq k} h_{i}^S s_i\rho_{i}^S  +  \rho^{P}\sum^{K}_{i=1,i\neq k} h_{i}^S   }
\right) \label{2ztobj:1} \\
\rm{s.t.} & \quad     \sum^{K}_{k=1}s_k=1, \quad s_k\in \{0,1\},   \quad \sum_{k=1}^{K}\rho_k^S\leq P^{\max}.
  \end{alignat}
\end{problem} 
where $\rho^P$ denotes the primary users' transmit power, and the notations, $h_{1k}^S$, $\rho_{1k}^S$, and $s_{1k}$, are simplified as  $h_{k}^S$, $\rho_{k}^S$, and $s_{k}$, respectively. 

For the considered special case, it is straightforward to show that the optimal solution of the greedy scheduling problem formulated in   \eqref{pb:2y} is simply given by
 $\rho^S_{k^*}=P^{\max}$, where $k^*=\arg \max \{h_{k}^S, 1\leq k \leq 2\}$.  Without loss of generality,   assume that the secondary user's effective channel gains on the two beams are ordered as follows: $h_1^S>     h_2^S$. Therefore, the key step to   prove that problems \ref{pb:2y} and   \ref{pb:2z}  have the same optimal solution is to show that assuming that $\rho_1^S+\rho_2^S=\rho^S\leq \rho^{\max}$, the optimal solutions of $\rho_1^S$ and $\rho_2^S$ are $\rho^{S}$ and $0$, respectively. Once this step is established, it is straightforward to show that the optimal value of $\rho^S$ is $\rho^{\max}$. 

  To simplify the proof, assume that $\alpha\rho^{S}$ is allocated to the first beam and the  $(1-\alpha)\rho^{S}$ is allocated to the second beam, $0\leq \alpha\leq 1$, which means that the objective function is given by
\begin{align}\label{obxx1}
 {f}(\alpha)\triangleq \log\left( 1+\frac{ h_{1}^S  \alpha \rho^{S}}{  h_{2}^S(1-\alpha)\rho^{S}  +    h_{2}^S \rho^{P}  }
\right)+ \log\left( 1+\frac{ h_{2}^S  (1-\alpha)\rho^{S}}{   h_{1}^S \alpha\rho^{S}  +    h_{1}^S  \rho^{P}  }
\right)  ,
\end{align} 
where $0\leq \alpha\leq 1$. The remainder of the proof is to show  that the objective function in \eqref{obxx1} is maximized by $\alpha=1$, regardless of the choices of $h_1^S$, $h_2^S$, $\rho^S$ and $\rho^P$.

Note that   $ {f}(\alpha)$ can be first rewritten  as follows:
\begin{align}
 f(\alpha)= &  \log\left( 1+\frac{ h_{1}^S  \alpha  }{  h_{2}^S(1-\alpha)   +    \beta h_{2}^S    }
\right)+ \log\left( 1+\frac{ h_{2}^S  (1-\alpha) }{   h_{1}^S \alpha  +  \beta  h_{1}^S   }
\right)   
\\\nonumber
= &  \log\left( 1+x\frac{   \alpha  }{   (1+\beta-\alpha)     }
\right)+ \log\left( 1+\frac{1 }{x}\frac{   (1-\alpha) }{     \alpha  +    \beta}
\right) , 
\end{align}
where $\beta=\frac{\rho^P}{\rho^S}$, $x=\frac{h_1^S}{h_2^S}$ and $x>1$ because of the assumption that $h_1^S>     h_2^S$.

    \begin{figure}[t]\centering \vspace{-0em}
    \epsfig{file=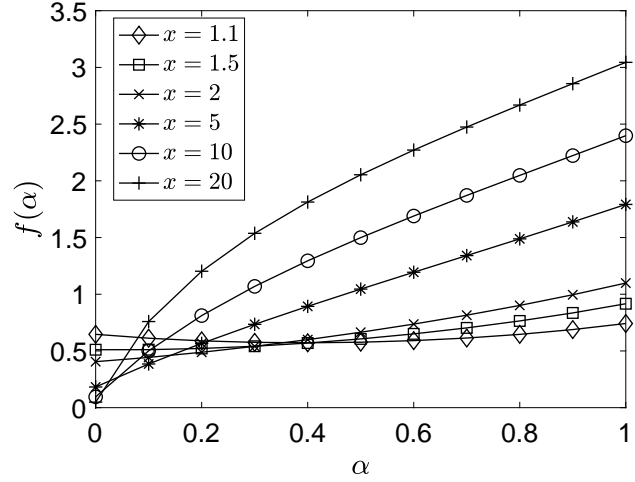, width=0.5\textwidth, clip=}\vspace{-0.5em}
\caption{ Property of the function $f(\alpha)$ with $\beta=1$. 
  \vspace{-1em}    }\label{fig00}   \vspace{-2em} 
\end{figure}

It is important to point out that $f(\alpha)$ can be either convex or concave, depending on the choice of $x$ and $\beta$, as shown in Fig. \ref{fig00}. However, $\alpha=1$ always maximizes $f(\alpha)$, regardless the choices of $x$ and $\beta$.   To show that $\alpha=1$   maximizes $f(\alpha)$,  the first order derivative of  $ f(\alpha)$
 with respect to $\alpha$ is first obtained as follows:  
  \begin{align}\nonumber
 &\frac{d f(\alpha)}{d\alpha}=  \frac{x-1}{ 1+\beta +(x-1)\alpha
}-\frac{1}{   \alpha-(1+\beta)
}+  \frac{x-1}{ (x-1)\alpha +x\beta+1}-\frac{1}{ \alpha+\beta  } 
\\\label{eq2333}&=   \frac{-  (1+\beta)\left( \left(x+1 \right)\alpha^2 +\left(x\left(\beta+\frac{x\beta+1}{x-1}\right) +\frac{1+\beta}{x-1}-(1+\beta) \right)\alpha+ 
\frac{x^2\beta^2+x\beta}{x-1} -\frac{(1+\beta)^2}{x-1}\right)
}{\left(\alpha +\frac{x\beta+1}{x-1}\right)(\alpha +\beta)  \left(\alpha+\frac{1+\beta}{x-1}\right) ( \alpha-(1+\beta))(x-1)} .
\end{align} 
Note that the quadratic function  in the numerator of \eqref{eq2333} is concave since $x>1$, $\alpha\leq 1$ and $\beta\geq 0$. 

%\begin{align}\nonumber
% \frac{d f(\alpha)}{d\alpha}=&   -2 \frac{   -x^2-x +4}{x+1  } = 2 \frac{   x^2+x -4}{x+1  } ,
%\end{align} 
%whose roots are $\frac{-1\pm \sqrt{17}}{2}$

As shown in Fig. \ref{fig00}, $f(\alpha)$ can be convex or concave, which makes the proof challenging. Interestingly,   $ \frac{d f(\alpha)}{d\alpha}$ is always positive at $\alpha=1$ for any choices of $x$ and $\beta$, as shown in the following:
\begin{align} \nonumber
\left.\frac{d f(\alpha)}{d\alpha}\right|_{\alpha=1}= &    \frac{ \left(x+1 \right)  + x\left(\beta+\frac{x\beta+1}{x-1}\right) +\frac{1+\beta}{x-1}-(1+\beta)   + 
\frac{x^2\beta^2+x\beta}{x-1} -\frac{(1+\beta)^2}{x-1}
}{\left(1 +\frac{x\beta+1}{x-1}\right)   \left(1+\frac{1+\beta}{x-1}\right) \beta (x-1)}  
\\\label{eq44x}
= &    \frac{ (1+\beta)^2x^2 -\beta x  -\beta^2 
}{\left(1 +\frac{x\beta+1}{x-1}\right)   \left(1+\frac{1+\beta}{x-1}\right) \beta (x-1)^2} .
\end{align} 
In order to show $\left.\frac{d f(\alpha)}{d\alpha}\right|_{\alpha=1}> 0$, it is sufficient to show that the numerator of \eqref{eq44x}, defined as $f_{\beta}(x)=(1+\beta)^2x^2 -\beta x  - \beta^2 $, is positive for $x>1$ and $\beta\geq 0$. Note that   the two roots of $f_{\beta}(x)=0$ are $\frac{\beta\pm \sqrt{\beta^2+4\beta^2(1+\beta)^2}}{2(1+\beta)^2}$. Therefore, the proof can be complete by showing that the positive root $\frac{\beta+ \sqrt{\beta^2+4\beta^2(1+\beta)^2}}{2(1+\beta)^2}< 1$, which   can be established  due to the equivalence between the following two inequalities: 
\begin{align}
 \sqrt{\beta^2+4\beta^2(1+\beta)^2} < 2(1+\beta)^2-\beta
\quad \Longleftrightarrow\quad
 \beta  < (1+2\beta).
\end{align}

The fact that $\left.\frac{d f(\alpha)}{d\alpha}\right|_{\alpha=1}>0$ is important because it shows that $ f(\alpha)$ is an increasing function at $\alpha=1$. Denote the two roots of the quadratic function in the numerator of \eqref{eq2333} by $r_1$ and $r_2$. Without loss of generality, assume $r_1\leq r_2$.  The use of  $\left.\frac{d f(\alpha)}{d\alpha}\right|_{\alpha=1}>0$ leads to the conclusion that $r_1\leq 1 \leq r_2$. Another important fact is that $f(0)<f(1)$, since scheduling the weak user   results in a smaller data rate  compared to the case with   the strong user scheduled. By using the two facts,  the proof for $\max f(\alpha) = f(1) $ can be established  as follows:

\subsubsection{ If  $r_1< 0$}  $\frac{d f(\alpha)}{d\alpha}$ is  positive for $0\leq \alpha\leq 1$, which means that $f(\alpha)$ is an increasing function for $0\leq \alpha\leq 1$. In Fig. \ref{fig00}, the   curves with $x=5$, $x=10$, $x=20$, $x=1.5$ and $x=2$ belong to this case.  Therefore, $\alpha=1$ can maximize the objective function $f(\alpha)$.

\subsubsection{ If  $r_1\geq  0$} $\frac{d f(\alpha)}{d\alpha}$ is  first non-positive for $0\leq \alpha<r_1$, and then becomes positive for $r_1\leq \alpha\leq 1$, which means that   $f(\alpha)$ is a non-increasing function for $0\leq \alpha<r_1$, and then becomes an increasing function $r_1\leq \alpha\leq 1$. In Fig. \ref{fig00}, the   curve with $x=1.1$  belongs to this case.  Furthermore, by using the fact that $f(0)<f(1)$,    $\alpha=1$ can still maximize the objective function $f(\alpha)$ in this case.

In summary, $\alpha=1$ can always maximize the objective function, and hence the proof for the lemma is complete.

\section{Proof for Lemma \ref{lemma1}}\label{proof1}
Recall that the aim of the tightening procedure is to find the maximal value of $\tilde{ {x}}_{\max,p}$, which can be achieved by solving the following optimization problem:
 \begin{problem}\label{pb:15x} 
  \begin{alignat}{2}
\underset{\mathbf{y} }{\rm{max}}  &\quad    
   \frac{ \mathbf{c}_p^T\mathbf{y}}{ \mathbf{d}_p^T\mathbf{R}\mathbf{y}  +t_{\mathbf{S}_{p1}\mathbf{S}_{p2}}} \label{15xtobj:1} \\
\rm{s.t.} & \quad    \mathbf{A}_p \mathbf{y}\leq \mathbf{b}_p   \label{15xttst:1}
\\&\quad \mathbf{A}_p^E \mathbf{y}= \mathbf{b}_p^E  .\label{15xttst:2} \end{alignat}
\end{problem}
It is important to point out   that the optimization variable vector,   $\mathbf{y}$, contains   $|\mathcal{S}|$ elements, whereas  \eqref{15xttst:2} contains $(|\mathcal{S}|-1)$ equality constraints. This important observation can be used to reduce the number of optimization variables from $|\mathcal{S}|$ to one only, as shown in the following.  

Without loss of generality, the constraint in \eqref{15xttst:2} can be expressed as follows: 
\begin{align}
\mathbf{a}_{p,p}^E  {y}_p +\tilde{\mathbf{A}}_p^E\tilde{\mathbf{y}}_p= \mathbf{b}_p^E.
\end{align}
  Assuming that   $\tilde{\mathbf{A}}_p^E$ is an invertible     matrix, the $(|\mathcal{S}|-1)$ optimization variables in $\tilde{\mathbf{y}}_p$ can be expressed as the following functions of $y_p$:
\begin{align}\label{ypp}
 \tilde{\mathbf{y}}_p=(\tilde{\mathbf{A}}_p^E)^{-1} (\mathbf{b}_p^E-\mathbf{a}_{p,p}^E  {y}_p).
\end{align}

By using  the fact that $\mathbf{c}_p$ is a $|\mathcal{S}|\times1$ all-zero vector except its $p$-th element, denoted by $ {c}_{p,p}$,  problem \ref{pb:15x} can be recast as the following optimization problem: 
 \begin{problem}\label{pb:167} 
  \begin{alignat}{2}
\underset{\mathbf{y} }{\rm{max}}  &\quad    
 \frac{  {c}_{p,p}{y_p}}{\bar{ {d}}_py_p+ \tilde{\mathbf{d}}\tilde{\mathbf{y}}_p  +t_{\mathbf{S}_{p1}\mathbf{S}_{p2}}} \label{167tobj:1} \\
\rm{s.t.} & \quad    \mathbf{a}_{p,p}y_p+ \tilde{\mathbf{A}}_p\tilde{\mathbf{y}}_p\leq \mathbf{b}_p   \label{167ttst:1} ,
\end{alignat}
\end{problem}
where $\bar{\mathbf{d}}= \mathbf{d}_p^T\mathbf{R}$, $\bar{ {d}}_p$ denotes the $p$-th element of $\bar{\mathbf{d}}$,  and  $\tilde{\mathbf{d}}$ is obtained from $\bar{\mathbf{d}}$ by removing $\bar{ {d}}_p$.

Furthermore, by using the fact that $(|\mathcal{S}|-1)$ optimization variables in $\tilde{\mathbf{y}}_p$ can be expressed as a function of $y_p$, constraint \eqref{167ttst:1} can be rewritten as follows: 
\begin{align}
\mathbf{a}_{p,p}y_p+ \tilde{\mathbf{A}}_p(\tilde{\mathbf{A}}_p^E)^{-1} (\mathbf{b}_p^E-\mathbf{a}_{p,p}^E  {y}_p)\leq \mathbf{b}_p ,
\end{align}
which can be further re-written as follows:
\begin{align}\label{newxxx1}
(\mathbf{a}_{p,p}-\tilde{\mathbf{A}}_p(\tilde{\mathbf{A}}_p^E)^{-1} \mathbf{a}_{p,p}^E ) {y}_p\leq (\mathbf{b}_p -\tilde{\mathbf{A}}_p(\tilde{\mathbf{A}}_p^E)^{-1} \mathbf{b}_p^E).
\end{align}
By using \eqref{newxxx1}, problem \ref{pb:16} can be expressed as the following simple optimization problem:
 \begin{problem}\label{pb:17x} 
  \begin{alignat}{2}
\underset{\mathbf{y} }{\rm{max}}  &\quad    
 \frac{  {c}_{p,p}{y_p}}{\bar{ {d}}_py_p+ \tilde{\mathbf{d}}\tilde{\mathbf{y}}_p  +t_{jk}} \label{17tobj:1} \\
\rm{s.t.} & \quad   {y}_p\leq \mathbf{a}_{\rm sign}\odot(\mathbf{b}_p -\tilde{\mathbf{A}}_p(\tilde{\mathbf{A}}_p^E)^{-1} \mathbf{b}_p^E)./ (\mathbf{a}_{p,p}-\tilde{\mathbf{A}}_p(\tilde{\mathbf{A}}_p^E)^{-1} \mathbf{a}_{p,p}^E ).
\end{alignat}
\end{problem}

Note that the following function,    $\frac{ax}{bx+1}$, is a monotonically increasing function of $x$ for $x\geq 0$, $a\geq 0$ and $b\geq 0$. Therefore, it is straightforward to show that the optimal solution of problem \ref{pb:17x} can be obtained as shown in the lemma.

\linespread{1.3}
\bibliographystyle{IEEEtran}
\bibliography{IEEEfull,trasfer}
  \end{document}